\documentclass[envcountsame,english]{llncs}

\usepackage{babel}
\usepackage{latexsym,amssymb,amsmath,ifthen,alltt,array,enumerate}
\usepackage{url,color}
\usepackage{graphicx}
\usepackage{xspace}
\usepackage{extarrows}
\RequirePackage{etex}
\usepackage{tikz}
\usetikzlibrary{decorations,arrows}
\RequirePackage{pifont}

\newcommand{\TTT}{\textsf{T\!\raisebox{-1mm}{T}\!T}\xspace}
\newcommand{\TTTT}{$\TTT\textsf{\!\raisebox{-1mm}{2}}$\xspace}

\newcommand{\m}[1]{\mathsf{#1}}

\renewcommand{\AA}{\mathcal{A}}

\newcommand{\CC}{\mathcal{C}}

\newcommand{\FF}{\mathcal{F}}

\newcommand{\MM}{\mathcal{M}}

\newcommand{\PP}{\mathcal{P}}
\newcommand{\PR}{\mathsf{CPS}}

\newcommand{\RL}{{\m{RL}}}
\newcommand{\RR}{\mathcal{R}}

\renewcommand{\SS}{\mathcal{S}}

\newcommand{\VV}{\mathcal{V}}

\newcommand{\TERM}{\mathcal{T(F,V)}}

\newcommand{\Var}{\VV\m{ar}}

\newcommand{\Pos}{\PP\m{os}}
\newcommand{\FPos}{\Pos_\FF}
\newcommand{\VPos}{\Pos_\VV}

\newcommand{\seq}[2][n]{{#2_1},\dots,{#2_{#1}}}

\newcommand{\nd}{\mathsf{nd}}
\newcommand{\du}{\mathsf{d}}

\newcommand{\from}{\leftarrow}

\newcommand{\FromB}[1]{\mathrel{{\vphantom{\to}_{#1}}{\from}}}
\newcommand{\cp}{\mathrel{\from\!\rtimes\!\to}}
\newcommand{\cpab}{\mathrel{\FromB{\alpha}\!\rtimes\!\to_\beta}}

\def\test#1#2#3{\setbox0=\hbox{$\vphantom{#1}^{#2}_{#3}$}%
                \dimen0=\wd0%
                \setbox1=\hbox{$\scriptstyle #2$}%
                \advance\dimen0-\wd1%
                \setbox1=\hbox{\hskip\dimen0\copy1}%
                \dimen0=\wd0%
                \setbox2=\hbox{$\scriptstyle #3$}%
                \advance\dimen0-\wd2%
                \setbox2=\hbox{\hskip\dimen0\copy2}%
                {\vphantom{#1}^{\box1}_{\box2}}{#1}
}
\newcommand{\From}[2]{\mathrel{\test{\from}{#2}{#1}}}

\newcommand{\ARS}{\langle A, \to \rangle}
\newcommand{\ARSn}{\langle A, \{ \to_\alpha \}_{\alpha \in I} \rangle}
\newcommand{\dD}[1]{\smash{\raisebox{-.5mm}{\rotatebox{#1}{%
$\scriptstyle <$}}}}
\newcommand{\dd}{\dD{90}}
\newcommand{\ddd}{\smash{\raisebox{-.5mm}{\rotatebox{45}{%
$\scriptstyle <$}}}}
\newcommand{\dDd}{\smash{\raisebox{0mm}{\rotatebox{135}{%
$\scriptstyle <$}}}}

\newcommand{\eee}{\smash{\raisebox{0mm}{\rotatebox{-45}{%
$\scriptstyle =$}}}}
\newcommand{\eEe}{\smash{\raisebox{-1mm}{\rotatebox{45}{%
$\scriptstyle =$}}}}
\newcommand{\LD}{\text{LD}}
\newcommand{\dDg}[1]{\smash{\raisebox{-.5mm}{\rotatebox{#1}{%
$\scriptstyle \eqslantless$}}}}
\newcommand{\ddg}{\dDg{90}}
\newcommand{\lda}[1]{\xleftarrow{\smash{%
\raisebox{-2mm}{\makebox[4mm]{$\stackrel{#1}{\circ}$}}}}}
\newcommand{\rda}[1]{\xrightarrow{\smash{%
\raisebox{-2mm}{\makebox[4mm]{$\stackrel{#1}{\circ}$}}}}}

\begin{document}

\bibliographystyle{lncs}
\pagestyle{plain}

\title{Decreasing Diagrams and Relative Termination}

\author{Nao Hirokawa\inst{1} \and Aart Middeldorp\inst{2}}
\institute{
School of Information Science \\
Japan Advanced Institute of Science and Technology, Japan \\
\email{hirokawa@jaist.ac.jp} \\[2ex]
\and
Institute of Computer Science \\
University of Innsbruck, Austria \\
\email{aart.middeldorp@uibk.ac.at}
}

\maketitle

\begin{abstract}
In this paper we use the decreasing diagrams technique to show that a
left-linear term rewrite system $\RR$ is confluent if all its critical
pairs are joinable and the critical pair steps are relatively terminating
with respect to $\RR$. We further show how to encode the rule-labeling
heuristic for decreasing diagrams as a satisfiability problem. Experimental
data for both methods are presented.
\end{abstract}

\section{Introduction}

This paper is concerned with automatically proving confluence of
term rewrite systems. Unlike termination, for which the interest in
automation gave and continues to give rise to new methods and tools,%
\footnote{\url{http://termination-portal.org/wiki/Termination_Competition}}
automating confluence has received little attention. Only very recently,
the first confluence tool made its appearance: ACP~\cite{ACP} implements
Knuth and Bendix' condition---joinability of critical pairs---for
terminating rewrite systems~\cite{KB70}, several critical pair criteria
for left-linear rewrite systems~\cite{H80,T88,vO97}, as well as
divide and conquer techniques based on persistence~\cite{AT97},
layer-preservation~\cite{O94}, and commutativity~\cite{R73}.

For abstract rewrite systems, the \emph{decreasing diagrams} technique of
van Oostrom~\cite{vO94} subsumes all sufficient conditions for confluence.
To use this technique for term rewrite systems, a well-founded order on
the rewrite steps has to be supplied such that rewrite peaks can be
completed into so-called decreasing diagrams.

We present two results in this paper. We show how to encode the
rule-labeling heuristic of van Oostrom~\cite{vO08} for linear rewrite
systems as a satisfiability problem. In this heuristic rewrite steps are
labeled by the applied rewrite rule. By limiting the number of steps that
may be used to complete local diagrams, we obtain a finite search problem
which is readily transformed into a satisfiability problem.
Any satisfying assignment returned by a modern SAT or SMT
solver is then translated back into a concrete rule-labeling.

Our second and main result employs the decreasing diagrams technique to
obtain a
new confluence result for left-linear but not necessarily right-linear
rewrite systems. It requires that the rewrite steps involved in the
generation of critical pairs are \emph{relatively terminating} with
respect to the rewrite system. This result can be viewed as a 
generalization of the two standard approaches for proving confluence:
orthogonality and joinability of critical pairs for terminating systems.
In the non-trivial proof we use the self-labeling heuristic in which
rewrite steps are labeled by their starting term.
 
In the next section we recall the decreasing diagrams technique
and present a small variation which better serves our purposes.
Section~\ref{confluence via relative termination} is devoted to our
main result. We prove that a locally confluent left-linear term rewrite
system is confluent if there are no infinite rewrite sequences that
involve infinitely many steps that were used in the generation of critical
pairs.
In Section~\ref{automation} we explain how this result is implemented.
Moreover, we show how the rule-labeling heuristic for decreasing diagrams
can be transformed into a satisfiability problem.
Section~\ref{experiments} presents experimental data.
In Section~\ref{conclusion} we
conclude with suggestions for future research.

\section{Decreasing Diagrams}
\label{decreasing diagrams}

We start this preliminary section by recalling the decreasing diagrams
technique for abstract rewrite systems (ARSs).
We write $\ARSn$ to denote the ARS
$\ARS$ where $\to$ is the union of $\to_\alpha$ for all $\alpha \in I$.
If $J \subseteq I$ then $\to_J$ denotes the union of
$\to_\alpha$ for all $\alpha \in J$. In order to
reduce the number of arrows, we denote the individual relations
$\to_\alpha$ of an ARS $\ARSn$ simply by $\alpha$.

Let $\AA  = \ARSn$ be an ARS and let $>$ be a well-founded order on $I$.
For every $\alpha \in I$ we write $\xrightarrow{\dd}_\alpha$ for the union
of $\to_\beta$ for all $\beta < \alpha$. If $\alpha, \beta \in I$ then
$\xrightarrow{\dd}_{\alpha\beta}$ denotes the union of
$\xrightarrow{\dd}_\alpha$ and $\xrightarrow{\dd}_\beta$.
We say that $\alpha$ and $\beta$ are \emph{locally decreasing} with
respect to $>$ and we write $\LD_>(\alpha,\beta)$ if
\[
{\FromB{\alpha} \cdot \to_\beta} \:\subseteq\:
{\smash{\xrightarrow{\dd}}_\alpha^*
\cdot \to_\beta^=
\cdot \mathrel{\smash{\xrightarrow{\dd}}_{\alpha\beta}^*}
\cdot \mathrel{\test{\xleftarrow{\dd}}{*}{\alpha\beta}}
\cdot \From{\alpha}{=}
\cdot \mathrel{\test{\xleftarrow{\dd}}{*}{\beta}}}
\]
Graphically (dashed arrows are implicitly existentially quantified and
double-headed arrows denote reflexive and transitive closure):
\[
\tikzstyle{element}=[circle]
\tikzstyle{ll}=[text width=1pt,text height=0pt]
\begin{tikzpicture}[scale=.7]
\node at (3,6) [element] (a) {$\cdot$};
\node at (0,3) [element] (b) {$\cdot$};
\node at (6,3) [element] (c) {$\cdot$};
\draw [->] (a) to (b);
\draw [->] (a) to (c);
\node[ll,anchor=south east] at (1.5,4.5)
{\makebox[0mm]{$\scriptstyle \alpha$}};
\node[ll,anchor=south west] at (4.5,4.5)
{\makebox[0mm]{$\scriptstyle \beta$}};
\node at (1,2) [element] (b1) {$\cdot$};
\node at (2,1) [element] (b2) {$\cdot$};
\node at (3,0) [element] (d)  {$\cdot$};
\node at (4,1) [element] (c2) {$\cdot$};
\node at (5,2) [element] (c1) {$\cdot$};
\draw [dash pattern=on 2pt off 2pt,->>]  (b) to (b1);
\draw [dash pattern=on 2pt off 2pt,->]  (b1) to (b2);
\draw [dash pattern=on 2pt off 2pt,->>] (b2) to (d);
\draw [dash pattern=on 2pt off 2pt,->>]  (c) to (c1);
\draw [dash pattern=on 2pt off 2pt,->]  (c1) to (c2);
\draw [dash pattern=on 2pt off 2pt,->>] (c2) to (d);
\node[ll,anchor=north east] at (0.5,2.5)
{\makebox[0mm]{$\scriptstyle \alpha$}};
\node[ll,anchor=south west] at (0.5,2.5)
{\makebox[0mm]{$\ddd$\hspace{1mm}}};
\node[ll,anchor=north east] at (1.5,1.5)
{\makebox[0mm]{$\scriptstyle \beta$}};
\node[ll,anchor=south west] at (1.5,1.5)
{\makebox[0mm]{$\eee$}};
\node[ll,anchor=north east] at (2.5,0.5)
{\makebox[0mm]{$\scriptstyle \alpha\beta$}};
\node[ll,anchor=south west] at (2.5,0.5)
{\makebox[0mm]{$\ddd$\hspace{1mm}}};
\node[ll,anchor=north west] at (5.5,2.5)
{\makebox[0mm]{$\scriptstyle \beta$}};
\node[ll,anchor=south east] at (5.5,2.5)
{\makebox[0mm]{$\hspace{2mm}\dDd$}};
\node[ll,anchor=north west] at (4.5,1.5)
{\makebox[0mm]{$\scriptstyle \alpha$}};
\node[ll,anchor=south east] at (4.5,1.5)
{\makebox[0mm]{$\hspace{2mm}\eEe$}};
\node[ll,anchor=north west] at (3.5,0.5)
{\makebox[0mm]{$\scriptstyle \alpha\beta$}};
\node[ll,anchor=south east] at (3.5,0.5)
{\makebox[0mm]{$\hspace{2mm}\dDd$}};
\node at (3,3) {$\LD_>(\alpha,\beta)$};
\end{tikzpicture}
\]
The ARS $\AA = \ARSn$ is locally decreasing if there exists a well-founded
order $>$ on $I$ such that $\LD(\alpha,\beta)$ for all
$\alpha, \beta \in I$.

Van Oostrom~\cite{vO94} obtained the following result.

\begin{theorem}
\label{dd}
Every locally decreasing ARS is confluent.
\qed
\end{theorem}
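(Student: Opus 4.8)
The plan is to prove the stronger Church--Rosser property, from which confluence is immediate, by recasting the problem as a termination argument on \emph{conversions}. A conversion between $a$ and $b$ is a finite alternating sequence $a = a_0 \leftrightarrow a_1 \leftrightarrow \cdots \leftrightarrow a_n = b$ in which each link is some step $\to_\gamma$ or $\FromB{\gamma}$. A conversion of the shape $\to^{*} \cdot \leftarrow^{*}$, i.e.\ with all forward steps preceding all backward steps, is a \emph{valley}, and the existence of a valley between $a$ and $b$ is exactly their joinability. Since any peak $c \mathrel{{}^{*}{\leftarrow}} a \to^{*} d$ is in particular a conversion, it suffices to show that \emph{every} conversion can be transformed into a valley with the same endpoints.

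First I would introduce a rewrite relation $\Rightarrow$ on conversions with fixed endpoints: a single $\Rightarrow$ step selects a \emph{local peak} $c \FromB{\alpha} x \to_\beta d$ occurring somewhere inside a conversion and replaces the two offending steps by the completion guaranteed by $\LD_>(\alpha,\beta)$, namely a conversion from $c$ to $d$ of the form ${\xrightarrow{\dd}}_\alpha^{*} \cdot \to_\beta^{=} \cdot {\xrightarrow{\dd}}_{\alpha\beta}^{*} \cdot {\xleftarrow{\dd}}_{\alpha\beta}^{*} \cdot \From{\alpha}{=} \cdot {\xleftarrow{\dd}}_\beta^{*}$. A conversion is a $\Rightarrow$-normal form precisely when it contains no pattern $\FromB{\cdot} \cdot \to$, i.e.\ precisely when it is a valley; this is the easy half. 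Consequently, if $\Rightarrow$ is well founded, then every conversion reduces to a valley and Church--Rosser follows.

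The heart of the argument, and the step I expect to be hardest, is the termination of $\Rightarrow$. Measuring a conversion by the plain multiset of its step labels, ordered by the multiset extension of $>$, does not suffice: when the peak $\FromB{\alpha} \cdot \to_\beta$ is filled, the labels $\alpha$ and $\beta$ may both survive, as the single facing step $\to_\beta^{=}$ and the single facing step $\From{\alpha}{=}$, so the multiset of labels need not decrease; moreover the two seams where the inserted diagram meets the surrounding conversion can expose brand new local peaks. The resolution is a more refined, well-founded measure in which each step is weighted not merely by its own label but by the labels of the steps it \emph{faces} across peaks, so that the two large labels $\alpha,\beta$ of the eliminated peak are charged strictly more than the at-most-two surviving facing steps together with the strictly smaller side steps. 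Concretely, I would assign to a conversion a multiset of multisets built from $>$ and show that replacing a peak by its decreasing diagram strictly decreases this measure in the doubly lifted multiset order.

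What remains is the bookkeeping of this strict decrease: verifying that each of the six segments produced by $\LD_>(\alpha,\beta)$ contributes less to the measure than the original pair $\{\alpha,\beta\}$, and that the peaks freshly exposed at the two seams are dominated as well. Once the strict decrease is established for a single $\Rightarrow$ step, well-foundedness of the chosen measure yields termination of $\Rightarrow$; since normal forms are valleys, confluence of the ARS follows.
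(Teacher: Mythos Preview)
The paper does not prove Theorem~\ref{dd}; it states the result with an immediate \qedsymbol{} and attributes it to van Oostrom~\cite{vO94}, so there is no in-paper argument to compare against.

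Your outline is essentially the standard strategy used in~\cite{vO94} and its successors~\cite{BKO98,vO08}: treat conversions as objects, rewrite a conversion by replacing an internal local peak by its locally decreasing completion, observe that normal forms are valleys, and reduce everything to termination of this conversion rewriting. You correctly identify the hard part as the termination measure, and you correctly note that the plain multiset of labels is insufficient because $\alpha$ and $\beta$ may both survive. The refinement you gesture at---weighting each step by the labels it faces, yielding a multiset of multisets compared in the doubly lifted multiset extension of $>$---is precisely the idea behind the measure in~\cite{vO08} (and is close in spirit to the lexicographic measure on label strings in the original~\cite{vO94}). What your proposal does \emph{not} do is actually define this measure and carry out the strict-decrease verification; you describe it only at the level of ``each step is weighted by the labels of the steps it faces,'' which is not yet a definition, and the seam analysis you mention is exactly where the details bite. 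So the plan is the right one and matches the literature, but as written it is a plan and not a proof: the concrete measure and the case analysis for the six segments plus the two seams remain to be supplied.
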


Variations of this fundamental confluence result are presented in
\cite{BKO98,KOV00,vO08}.
We present a version of Theorem~\ref{dd} which
is more suitable for our purposes.

Let $\AA  = \ARSn$ be an ARS. Let $(>,\geqslant)$ consist of a well-founded
order $>$ on $I$ together with a quasi-order $\geqslant$ such that
${\geqslant} \cdot {>} \cdot {\geqslant} \subseteq {>}$.
For every $\alpha \in I$ we write $\xrightarrow{\ddg}_\alpha$ for the
union of $\to_\beta$ for all $\beta \leqslant \alpha$ and all
$\beta < \alpha$. (Note that ${>} \subseteq {\geqslant}$
need not hold.)
We say that $\alpha$ and $\beta$ are locally decreasing with respect to
$(>,\geqslant)$ and we write $\LD_{(>,\geqslant)}(\alpha,\beta)$ if
\[
{\FromB{\alpha} \cdot \to_\beta} \:\subseteq\:
{\smash{\xrightarrow{\dd}}_\alpha^*
\cdot \xrightarrow{\ddg}_\beta^{=}
\cdot \mathrel{\smash{\xrightarrow{\dd}}_{\alpha\beta}^*}
\cdot \mathrel{\test{\xleftarrow{\dd}}{*}{\alpha\beta}}
\cdot \mathrel{\test{\xleftarrow{\ddg}}{=}{\alpha}}
\cdot \mathrel{\test{\xleftarrow{\dd}}{*}{\beta}}}
\]
The ARS $\AA = \ARSn$ is \emph{extended} locally decreasing if there
exists $(>,\geqslant)$ such that $\LD_{(>,\geqslant)}(\alpha,\beta)$
for all $\alpha, \beta \in I$.

\begin{theorem}
\label{dde}
Every extended locally decreasing ARS is confluent.
\end{theorem}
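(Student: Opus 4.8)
My plan is to prove Theorem~\ref{dde} by a \emph{direct} adaptation of van Oostrom's proof of Theorem~\ref{dd}~\cite{vO94}, rather than by reducing the extended notion to the basic one. The tempting reduction---relabel every step labelled $\alpha$ by its $\geqslant$-equivalence class and order these classes by the well-founded order induced by $>$---does not work. A facing step $\xrightarrow{\ddg}_\beta$ may carry a label $\gamma$ with $\gamma \leqslant \beta$ but $\gamma \not< \beta$ and $\gamma$ not $\geqslant$-equivalent to $\beta$; since $\geqslant$ need not be well-founded, such a step cannot be placed below any fixed well-founded order, so there is no slot for it in an ordinary decreasing diagram. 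This is in fact unavoidable: if every extended diagram could be turned into an ordinary decreasing diagram for a well-founded order, the extension would be pointless. Hence $\geqslant$ must be dealt with \emph{inside} the confluence proof.

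First I would record the two consequences of the compatibility assumption that do the real work. Since $\geqslant$ is reflexive, ${\geqslant} \cdot {>} \cdot {\geqslant} \subseteq {>}$ specializes to ${\geqslant} \cdot {>} \subseteq {>}$ and ${>} \cdot {\geqslant} \subseteq {>}$. These inclusions say that a single $\geqslant$-facing step absorbs, on either side, into a strictly smaller step: if $\gamma \leqslant \beta$ then every label $<\gamma$ is already $<\beta$, and symmetrically. I would then set up the argument exactly as van Oostrom does: reduce confluence to the Church--Rosser property and eliminate local peaks by well-founded induction, where a local peak $b \FromB{\alpha} a \to_\beta c$ is measured through the multiset extension $>_{\m{mul}}$ of the well-founded order $>$ (peaks by $\{\alpha,\beta\}$, conversions by the multiset of the labels occurring in them). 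Extended local decreasingness replaces such a peak by the conversion on the right-hand side of $\LD_{(>,\geqslant)}(\alpha,\beta)$, and the task is to show that all local peaks internal to the resulting diagram are strictly smaller, so that the tiling process terminates.

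The main obstacle is precisely the two at-most-one $\geqslant$-facing steps $\xrightarrow{\ddg}_\beta^=$ and $\xleftarrow{\ddg}_\alpha^=$: unlike the strict $\xrightarrow{\dd}$ steps, a facing step labelled $\gamma \leqslant \beta$ does not by itself lower the primary measure, so the bookkeeping that worked for Theorem~\ref{dd} no longer applies verbatim. This is exactly where the compatibility inclusions enter. When such a facing step meets a strictly smaller neighbouring step to form a new internal peak, ${\geqslant}\cdot{>}\subseteq{>}$ (or its mirror) shows that the new peak's labels are still strictly dominated by $\beta$ (resp.\ $\alpha$), so no label $\geqslant \beta$ or $\geqslant \alpha$ is created and the measure strictly decreases. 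I would make this precise with a lexicographic measure whose primary component is $>_{\m{mul}}$ and whose secondary component counts the $\geqslant$-facing steps, and verify the decrease by a case analysis following the shape of the completion. The routine but lengthy part is the diagram chase through each internal peak; the conceptual crux is isolating the compatibility condition as the single property that keeps the induction well-founded in the presence of a non-well-founded $\geqslant$.
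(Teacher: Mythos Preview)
Your dismissal of the reduction approach is premature, and the paper proves the theorem by precisely such a reduction. You are right that relabelling by $\geqslant$-equivalence classes fails for the reason you give: a label $\gamma$ with $\gamma \leqslant \beta$, $\gamma \not< \beta$, $\beta \not\leqslant \gamma$ has no slot. But this only rules out \emph{that} grouping, not all groupings. The paper instead sets $C_\alpha = \{\beta \mid \alpha \geqslant \beta \text{ and } \alpha \not> \beta\}$, which is larger than the equivalence class: your problematic $\gamma$ lands in $C_\beta$ by definition, so ${\xrightarrow{\ddg}_\beta} \subseteq {\xrightarrow{\dd}_{C_\beta}} \cup {\to_{C_\beta}}$ and the facing step goes into the $\to_{C_\beta}^=$ slot of an ordinary decreasing diagram. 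The compatibility condition ${\geqslant}\cdot{>}\cdot{\geqslant}\subseteq{>}$ is used once, to check that $C_\alpha > C_\beta \iff \alpha > \beta$ is well-defined on these sets; well-foundedness of the induced order then follows immediately from well-foundedness of $>$. Theorem~\ref{dd} applies directly to the relabelled system, and confluence transfers back because the underlying rewrite relation is unchanged. The whole argument is a few lines.

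Your sentence ``if every extended diagram could be turned into an ordinary decreasing diagram for a well-founded order, the extension would be pointless'' is therefore mistaken on both counts: the reduction exists, and the extension is still useful---not because it is strictly stronger at the abstract level, but because in concrete applications (such as the self-labelling in Section~\ref{confluence via relative termination}, where $\geqslant$ is $\to_\RR^*$) the pair $({>},{\geqslant})$ is what one naturally has in hand, and one does not want to construct the classes $C_\alpha$ explicitly. Your direct adaptation of van Oostrom's argument may well go through, but you have only sketched it (``routine but lengthy''), and it is considerably heavier than the five-line reduction you rejected.
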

\begin{proof}
Let $\AA = \ARSn$ be extended locally decreasing with respect to
$(>,\geqslant)$.
We write $C_\alpha$ as the set of all $\beta \in I$ with
$\alpha \geqslant \beta$ and $\alpha \not> \beta$.
The set of all such $C_\alpha$ is denoted by $\CC$.
For every $C \in \CC$ we write $\to_C$ for the union of
$\to_\alpha$ for all $\alpha \in C$. The well-founded order $>$ on $I$
can be lifted to $\CC$: $C_\alpha > C_\beta$ if $\alpha > \beta$.
If $C_\alpha = C_{\alpha'}$,
$C_\beta = C_{\beta'}$, and $\alpha > \beta$ then
$\alpha' \geqslant \alpha > \beta \geqslant \beta'$ and thus
$\alpha' > \beta'$
because of the requirement
${\geqslant} \cdot {>} \cdot {\geqslant} \subseteq {>}$.
Hence $>$ is well-defined on $\CC$.
If $\beta \leqslant \alpha$ and $\beta < \alpha$ then
${\to_\beta} \subseteq {\to_{C_\beta}} \subseteq
{\xrightarrow{\dd}_{C_\alpha}}$.
If $\beta \leqslant \alpha$ and $\beta \not< \alpha$ then
$\beta \in C_\alpha$ and thus
${\to_\beta} \subseteq {\to_{C_\alpha}}$.
Hence ${\xrightarrow{\ddg}_\alpha} \subseteq
{\xrightarrow{\dd}_{C_\alpha}} \cup {\to_{C_\alpha}}$.
Now consider arbitrary sets $C, D \in \CC$ and let $\alpha \in C$ and
$\beta \in D$. From the assumption $\LD_{(>,\geqslant)}(\alpha,\beta)$ we
obtain
\[
{\FromB{\alpha} \cdot \to_\beta} \:\subseteq\:
{\smash{\xrightarrow{\dd}}_\alpha^*
\cdot \xrightarrow{\ddg}_\beta^=
\cdot \mathrel{\smash{\xrightarrow{\dd}}_{\alpha\beta}^*}
\cdot \mathrel{\test{\xleftarrow{\dd}}{*}{\alpha\beta}}
\cdot \mathrel{\test{\xleftarrow{\ddg}}{=}{\alpha}}
\cdot \mathrel{\test{\xleftarrow{\dd}}{*}{\beta}}}
\]
By construction, the latter relation is contained in
\[
{\smash{\xrightarrow{\dd}}_C^*
\cdot \xrightarrow{\dd}_D^=
\cdot \mathrel{\smash{\xrightarrow{\dd}}_{CD}^*}
\cdot \mathrel{\test{\xleftarrow{\dd}}{*}{CD}}
\cdot \mathrel{\test{\xleftarrow{\dd}}{=}{C}}
\cdot \mathrel{\test{\xleftarrow{\dd}}{*}{D}}}
\]
Since
\[
{\FromB{C} \cdot \to_D} \:=\:
\makebox[7mm]{$\displaystyle \bigcup_{\alpha \in C,\,\beta \in D}$}
{\FromB{\alpha} \cdot \to_\beta}
\]
we conclude $\LD_>(C,D)$. According to Theorem~\ref{dd}, the ARS
$\langle A, \{ \to_C \}_{C \in \CC} \rangle$ is confluent. Since
\[
\bigcup_{\alpha \in I} \to_\alpha \:=\:
\makebox[5mm]{$\displaystyle \bigcup_{C \in \CC}$} \to_C
\]
it follows that $\AA$ is confluent.
\qed
\end{proof}

We are interested in the application of Theorems~\ref{dd} and~\ref{dde}
for proving confluence of term rewrite systems (TRSs).

Many sufficient conditions for confluence of TRSs are based on critical
pairs. Critical pairs are generated from overlaps. An overlap
$(l_1 \to r_1, p, l_2 \to r_2)_\mu$ of a TRS $\RR$ consists of variants
$l_1 \to r_1$ and $l_2 \to r_2$ of rewrite rules of $\RR$
without common variables, a position $p \in \Pos_\FF(l_2)$, and
a most general unifier $\mu$ of $l_1$ and $l_2|_p$.
If $p = \epsilon$ then we require that $l_1 \to r_1$ and $l_2 \to r_2$ are
not variants. The induced critical pair is
$l_2\mu[r_1\mu]_p \approx r_2\mu$.
Following Dershowitz~\cite{D05}, we write $s \cp t$ to indicate that
$s \approx t$ is a critical pair.

In \cite{vO08} van Oostrom proposed the \emph{rule-labeling heuristic}
in which rewrite steps are partitioned according to the employed rewrite
rules. If one can find an order on the rules of a \emph{linear} TRS such
that every critical pair is locally decreasing, confluence is guaranteed.
A formalization of this heuristic is given below where ${\cpab}$
denotes the set of critical pairs obtained from overlaps
$(\alpha, p, \beta)_\mu$.

\begin{theorem}
\label{dd-ll}
A linear TRS $\RR$ is confluent if there exists a well-founded order
$>$ on the rules of $\RR$ such that
\[
{\cpab} \subseteq 
{\smash{\xrightarrow{\dd}}_\alpha^*
\cdot \xrightarrow{\ddg}_\beta^{=}
\cdot \mathrel{\smash{\xrightarrow{\dd}}_{\alpha\beta}^*}
\cdot \mathrel{\test{\xleftarrow{\dd}}{*}{\alpha\beta}}
\cdot \mathrel{\test{\xleftarrow{\ddg}}{=}{\alpha}}
\cdot \mathrel{\test{\xleftarrow{\dd}}{*}{\beta}}}
\]
for all rewrite rules $\alpha, \beta \in \RR$.
Here $\geqslant$ is the reflexive closure of $>$.
\qed
\end{theorem}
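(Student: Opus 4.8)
The plan is to obtain the result from Theorem~\ref{dde} via the rule-labeling. I take the ARS with carrier $\TERM$, index set the rules of $\RR$, and $\to_\alpha$ the one-step rewrite relation that applies rule $\alpha$; the well-founded order is the given $>$ and $\geqslant$ is its reflexive closure. The side condition ${\geqslant}\cdot{>}\cdot{\geqslant}\subseteq{>}$ is then immediate from transitivity of $>$, and because $\geqslant$ is reflexive we have ${\to_\gamma}\subseteq{\xrightarrow{\ddg}_\gamma}$ for every rule $\gamma$. It therefore suffices to verify $\LD_{(>,\geqslant)}(\alpha,\beta)$ for all rules $\alpha,\beta$, after which Theorem~\ref{dde} yields confluence of the ARS, that is, of $\RR$.

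To check local decreasingness I fix a local peak $s\FromB{\alpha}u\to_\beta t$ and split on the positions of the two contracted redexes. When they are parallel the steps commute and the valley $s\to_\beta v\FromB{\alpha}t$ obtained by contracting both redexes is decreasing, since the single $\beta$-step fits $\xrightarrow{\ddg}_\beta^=$ and the returning single $\alpha$-step fits $\xleftarrow{\ddg}_\alpha^=$. When one redex sits inside a subterm bound to a variable of the other rule's left-hand side, full linearity closes the peak with at most one step per side: left-linearity of the outer rule keeps its redex pattern intact after the inner contraction, so the outer step remains available, and right-linearity of the outer rule guarantees that the inner redex is duplicated at most once in the contractum, so a single inner step suffices. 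Both closures again have the shape demanded by $\LD_{(>,\geqslant)}(\alpha,\beta)$.

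The decisive case is a proper overlap at a function position, in which the peak is a context-and-substitution instance of a critical pair. If the $\alpha$-redex is inner we have $s=C[s'\sigma]$ and $t=C[t'\sigma]$ for a critical pair $s'\cpab t'$, and I transport the decreasing valley supplied by the hypothesis verbatim from $s'$, $t'$ to $s$, $t$. What makes this sound is that rule-labeling is stable under contexts and substitutions: each step $a\to_\gamma b$ becomes $C[a\sigma]\to_\gamma C[b\sigma]$ with the same rule $\gamma$, hence the same label, so every constraint $\gamma<\alpha$ or $\gamma\leqslant\beta$ is preserved and the transported valley still witnesses $\LD_{(>,\geqslant)}(\alpha,\beta)$. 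If instead the $\beta$-redex is inner, the peak is an instance of a critical pair generated by an overlap with $\beta$ inner and $\alpha$ outer; reading its decreasing valley from the $\alpha$-reduct towards the $\beta$-reduct, and using $\xrightarrow{\dd}_{\alpha\beta}=\xrightarrow{\dd}_{\beta\alpha}$, reproduces exactly the shape of $\LD_{(>,\geqslant)}(\alpha,\beta)$. As the hypothesis ranges over all ordered pairs of rules, both orientations are covered.

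I expect the genuine obstacle to be the variable-overlap case rather than the overlap case, because it is there that full linearity (not just left-linearity) is forced: without right-linearity the inner redex may be copied several times, so closing needs several equally-labeled steps and overruns the at-most-one ($^=$) budget of the decreasing shape, and without left-linearity the outer redex may be destroyed. The overlap case is conceptually central but technically light once one notes that rule-labels survive contexts and substitutions --- exactly the property that the self-labeling used for the paper's main result lacks. Combining the cases gives $\LD_{(>,\geqslant)}(\alpha,\beta)$ for all $\alpha,\beta$, and Theorem~\ref{dde} finishes the proof.
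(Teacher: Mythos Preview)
Your argument is correct and is exactly the standard proof one would supply here. Note, however, that the paper does not actually give a proof of this theorem: it is stated with a terminal \qed and attributed to van Oostrom's rule-labeling heuristic~\cite{vO08}, the point being that it is an immediate consequence of Theorem~\ref{dde} once one labels steps by rules and runs through the usual three-way case split on local peaks. Your write-up spells out precisely that argument. The observations you single out---that rule labels are stable under contexts and substitutions (so critical-pair valleys lift), that the hypothesis ranges over ordered pairs so the mirrored overlap is covered via $\smash{\xrightarrow{\dd}}_{\alpha\beta}=\smash{\xrightarrow{\dd}}_{\beta\alpha}$, and that right-linearity is what keeps the variable-overlap closure within the $^=$ budget---are the right ones, and your remark about where linearity is genuinely needed matches Example~\ref{counterexample1}.
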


The heuristic readily applies to the following example from \cite{GL06}.

\begin{example}
\label{GL}
Consider the linear TRS $\RR$ consisting of the rewrite rules
\begin{alignat*}{4}
1\colon && \m{nat} &\to \m{0} : \m{inc}(\m{nat}) & \qquad
4\colon &\;& \m{inc}(x : y) &\to \m{s}(x) : \m{inc}(y) \\
2\colon && \m{hd}(x : y) &\to x &
5\colon &\;& \m{inc}(\m{tl}(\m{nat})) &\to \m{tl}(\m{inc}(\m{nat})) \\
3\colon && \m{tl}(x : y)  &\to y &
\end{alignat*}
There is one critical pair:
\[
\tikzstyle{element}=[rectangle]
\tikzstyle{ll}=[text width=1pt,text height=0pt]
\begin{tikzpicture}
\node at (1.5,1) [element] (a) {$\m{inc}(\m{tl}(\m{nat}))$};
\node at (0,0) [element] (b) {$\m{inc}(\m{tl}(\m{0} : \m{inc}(\m{nat})))$};
\node at (3,0) [element] (c) {$\m{tl}(\m{inc}(\m{nat}))$};
\draw [->] (a) to (b);
\node[ll,anchor=south east] at (0.75,0.5)
{\makebox[0mm]{$\scriptstyle 1$}};
\draw [->] (a) to (c);
\node[ll,anchor=south west] at (2.25,0.5)
{\makebox[0mm]{$\scriptstyle 5$}};
\end{tikzpicture}
\]
We have
\begin{align*}
\m{inc}(\m{tl}(\m{0} : \m{inc}(\m{nat}))) 
\xrightarrow{3} {} & \m{inc}(\m{inc}(\m{nat})) \\
\m{tl}(\m{inc}(\m{nat}))
\xrightarrow{1} \m{tl}(\m{inc}(\m{0} : \m{inc}(\m{nat})))
\xrightarrow{4} \m{tl}(\m{s}(\m{0}) : \m{inc}(\m{inc}(\m{nat}))) 
\xrightarrow{3} {} & \m{inc}(\m{inc}(\m{nat}))
\end{align*}
Hence the critical pair is locally decreasing with respect to the
rule-labeling heuristic together with the order $5 > 1, 3, 4$.
\end{example}

The following example (Vincent van Oostrom, personal communication)
shows that linearity in Theorem~\ref{dd-ll} cannot be weakened to
left-linearity.

\begin{example}
\label{counterexample1}
Consider the TRS $\RR$ consisting of the rewrite rules
\begin{xalignat*}{4}
1\colon \m{f}(\m{a},\m{a}) &\to \m{c} &
2\colon \m{f}(\m{b},x) &\to \m{f}(x,x) &
3\colon \m{f}(x,\m{b}) &\to \m{f}(x,x) &
4\colon \m{a} &\to \m{b}
\end{xalignat*}
There are three critical pairs:
\[
\tikzstyle{element}=[rectangle]
\tikzstyle{ll}=[text width=1pt,text height=0pt]
\begin{tikzpicture}
\node at (1,1) [element] (a) {$\m{f}(\m{a},\m{a})$};
\node at (0,0) [element] (b) {$\m{f}(\m{a},\m{b})$};
\node at (2,0) [element] (c) {$\m{c}\mathstrut$};
\draw [->] (a) to (b);
\node[ll,anchor=south east] at (0.5,0.5)
{\makebox[0mm]{$\scriptstyle 4$}};
\draw [->] (a) to (c);
\node[ll,anchor=south west] at (1.5,0.5)
{\makebox[0mm]{$\scriptstyle 1$}};
\end{tikzpicture}
\qquad
\begin{tikzpicture}
\node at (1,1) [element] (a) {$\m{f}(\m{a},\m{a})$};
\node at (0,0) [element] (b) {$\m{f}(\m{b},\m{a})$};
\node at (2,0) [element] (c) {$\m{c}\mathstrut$};
\draw [->] (a) to (b);
\node[ll,anchor=south east] at (0.5,0.5)
{\makebox[0mm]{$\scriptstyle 4$}};
\draw [->] (a) to (c);
\node[ll,anchor=south west] at (1.5,0.5)
{\makebox[0mm]{$\scriptstyle 1$}};
\end{tikzpicture}
\qquad
\begin{tikzpicture}
\node at (1,1) [element] (a) {$\m{f}(\m{b},\m{b})$};
\node at (0,0) [element] (b) {$\m{f}(\m{b},\m{b})$};
\node at (2,0) [element] (c) {$\m{f}(\m{b},\m{b})$};
\draw [->] (a) to (b);
\node[ll,anchor=south east] at (0.5,0.5)
{\makebox[0mm]{$\scriptstyle 2$}};
\draw [->] (a) to (c);
\node[ll,anchor=south west] at (1.5,0.5)
{\makebox[0mm]{$\scriptstyle 3$}};
\end{tikzpicture}
\]
We have $\m{f}(\m{a},\m{b})
\xrightarrow{3} \m{f}(\m{a},\m{a})
\xrightarrow{1} \m{c}$
and $\m{f}(\m{b},\m{a})
\xrightarrow{2} \m{f}(\m{a},\m{a})
\xrightarrow{1} \m{c}$. It follows that the critical pairs are locally
decreasing by taking the order $4 > 2,3$. Nevertheless, the conversion
$\m{f}(\m{b},\m{b}) \from \m{f}(\m{b},\m{a}) \from \m{f}(\m{a},\m{a})
\to \m{c}$ reveals that $\RR$ is not confluent.
\end{example}

In the next section we show that the restriction to right-linear rewrite
rules can be dropped by imposing a relative termination condition. 

\section{Confluence via Relative Termination}
\label{confluence via relative termination}

In this section we weaken the right-linearity requirement in
Theorem~\ref{dd-ll}.
Let $\RR$ be a TRS. We denote the set
\[
\{ 
l_2\mu \to l_2\mu[r_1\mu]_p, l_2\mu \to r_2\mu \mid 
\text{$(l_1 \to r_1, p, l_2 \to r_2)_\mu$ is an overlap of $\RR$} \}
\]
of rewrite steps that give rise to critical pairs of $\RR$ by
$\PR(\RR)$.
The rules in $\PR(\RR)$ are called \emph{critical pair steps}.
We say that $\RR$ is relatively terminating with respect to $\SS$ if the
relation $\to_\SS^* \cdot \to_\RR^{} \cdot \to_\SS^*$ is well-founded.
The main result of this section (Theorem~\ref{dd-l2} below) states that
a left-linear locally confluent TRS $\RR$ is confluent if 
$\PR(\RR)$ is relatively terminating with respect to $\RR$. In the
proof we use decreasing diagrams with the \emph{self-labeling heuristic}
in which rewrite steps are labeled by their starting term. A key
problem when trying to prove confluence in the absence of termination is
the handling of duplicating rules.
Parallel rewrite steps are typically used for this purpose. To anticipate
future developments (cf.\ Section~\ref{conclusion}) we use
complete development steps instead.
However, first we present a special case of our main result in which
duplicating rules are taken care of by requiring them to be relatively
terminating with respect to the non-duplicating ones.

\begin{theorem}
\label{dd-l1}
Let $\RR$ be a left-linear TRS. Let $\RR_\du$ be the subset of duplicating
rules and $\RR_\nd$ the subset of non-duplicating rules in $\RR$. The TRS
$\RR$ is confluent if ${\cp} \subseteq {\downarrow}$ and
$\PR(\RR) \cup \RR_\du$ is relatively terminating with respect to
$\RR_\nd$.
\end{theorem}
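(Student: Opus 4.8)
The plan is to apply the extended decreasing diagrams technique (Theorem~\ref{dde}) with the self-labeling heuristic, where every rewrite step $s \to t$ is labeled by its source term $s$. I would order the labels by the relative termination relation: set $s > t$ if $t$ is reachable from $s$ via $\to_{\RR_\nd}^* \cdot \to_{\PR(\RR) \cup \RR_\du} \cdot \to_{\RR_\nd}^*$, which is well-founded by the relative termination hypothesis, and take $\geqslant$ to be (the reflexive–transitive closure of) $\to_{\RR_\nd}$. The intuition is that $\RR_\nd$-steps do not increase the label in the strict sense but may be used freely ``for free,'' while any critical-pair step or duplicating step strictly decreases the label. First I would verify that the pair $(>,\geqslant)$ satisfies the compatibility condition ${\geqslant} \cdot {>} \cdot {\geqslant} \subseteq {>}$ required by Theorem~\ref{dde}; this follows because prepending or appending $\to_{\RR_\nd}^*$ to a strictly decreasing sequence keeps it strictly decreasing, by the definition of relative termination.

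Next I would analyze an arbitrary local peak $t \From{\alpha}{} s \to_\beta u$, where the labels are $\alpha = s$ and $\beta = s$ (both steps start from $s$). There are the usual three cases according to the relative positions of the two redexes. In the \emph{disjoint} case the steps commute in one step each, and since the two contracted redexes are at parallel positions the residual steps start from terms that are $\RR_\nd$-reachable or strictly below $s$, so local decreasingness is immediate. In the \emph{variable-overlap} case, where one redex lies in a substitution part of the other, left-linearity is essential: I would complete the diagram using a complete development of the descendants of the contracted redex. Here duplicating rules can create several copies of a subterm, but because such steps belong to $\PR(\RR) \cup \RR_\du$ (or are $\RR_\nd$-steps that may be replicated), the resulting development steps have source terms that are strictly smaller in $>$, so they fit under the $\xrightarrow{\dd}$ arrows of the decreasing diagram.

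The remaining and genuinely hard case is a \emph{critical overlap}, where the two redexes are at overlapping positions and the peak is an instance of a critical pair $s \cp u$ (or its symmetric form). Here I would invoke the hypothesis ${\cp} \subseteq {\downarrow}$: the critical pair is joinable, say $t' \to_\RR^* v \From{}{*} u'$ for the critical pair $t' \approx u'$, and this joinability sequence lifts to a valley $t \to_\RR^* \cdot \From{}{*} u$ closing the peak. The crucial observation is that both steps of the peak are \emph{critical-pair steps}, hence lie in $\PR(\RR)$, so their source $s$ strictly dominates every term occurring strictly below it in the valley; consequently all the joining steps have source terms $\leqslant s$ in the appropriate sense and the diagram is locally decreasing. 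The main obstacle I anticipate is making this last argument fully precise: one must show that every term appearing in the closing valley is reachable from $s$ through a relatively decreasing sequence, so that its label is $< s$, and in particular that no joining step reintroduces a term with label $\geqslant s$. This is exactly where the relative termination of $\PR(\RR) \cup \RR_\du$ with respect to $\RR_\nd$ is used: it guarantees that once a critical-pair step or a duplicating step has been taken, the whole remaining conversion stays strictly below the starting label, even in the presence of interspersed (non-decreasing) $\RR_\nd$-steps. With all three cases discharged, $\LD_{(>,\geqslant)}(\alpha,\beta)$ holds for all labels, and Theorem~\ref{dde} yields confluence of $\RR$.
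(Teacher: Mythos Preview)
Your proposal is correct and follows essentially the same approach as the paper: self-labeling, with the strict order ${>} = {\to_{(\PR(\RR)\cup\RR_\du)/\RR_\nd}^+}$, and the standard three-way case split on the relative positions of the two redexes. The only differences worth noting are that the paper takes ${\geqslant} = {\to_\RR^*}$ rather than ${\to_{\RR_\nd}^*}$ (which slightly streamlines matters since every $\RR$-step then immediately gives $s \geqslant t_i$), and that it closes the variable-overlap case with plain $\to^*$ rather than complete developments---those are reserved for Theorem~\ref{dd-l2}; your justification in that case should simply read: if the \emph{outer} rule is duplicating then $s > t_1$, hence every source along $t_1 \to^* u$ is $< s$, and otherwise at most one closing step is needed and the argument is as in the parallel case.
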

\begin{proof}
We label rewrite steps by their starting term. Labels are compared with
respect to the strict order
${>} = {\to_{(\PR(\RR) \cup \RR_\du)/\RR_\nd}^+}$ and
the quasi-order ${\geqslant} = {\to_\RR^*}$.
Note that $>$ is well-founded by the assumption that
$\PR(\RR) \cup \RR_\du$ is relatively terminating with respect to
$\RR_\nd$.
We show that all local peaks of $\RR$ are extended locally decreasing.
Let $s \to t_1$ and $s \to t_2$ by applying the rewrite rules
$l_1 \to r_1$ and $l_2 \to r_2$ at the positions $p_1$ and $p_2$.
We may assume that $l_1 \to r_1$ and $l_2 \to r_2$ do not share variables
and thus there exists a substitution $\sigma$ such that
$s = s[l_1\sigma]_{p_1} = s[l_2\sigma]_{p_2}$, $t_1 = s[r_1\sigma]_{p_1}$,
and $t_2 = s[r_2\sigma]_{p_2}$.
We distinguish three cases.
\begin{enumerate}
\item
If $p_1 \mathop{\|} p_2$ then $t_1 \to u \from t_2$
for the term $u = s[r_1\sigma,r_2\sigma]_{p_1,p_2}$.
We have $s > t_1$ if $l_1 \to r_1$ is duplicating and
$s \geqslant t_1$ if $l_1 \to r_1$ is non-duplicating. So in
both cases we have $t_1 \xrightarrow{\ddg}_s u$. Similarly,
$t_2 \xrightarrow{\ddg}_s u$ and thus we have local decreasingness.
\item
Suppose the redexes $l_1\sigma$ at position $p_1$ and
$l_2\sigma$ at position $p_2$ overlap. If $p_1 = p_2$ and $l_1 \to r_1$ and
$l_2 \to r_2$ are variants then $t_1 = t_2$ and there is nothing to prove.
Assume without loss of generality that $p_1 \leqslant p_2$.
There exists a substitution $\tau$ such that $t_1 = s[v\tau]_{p_1}$ and
$t_2 = s[u\tau]_{p_1}$ with $u \cp v$. By assumption $u \downarrow v$ and
hence also $t_1 \downarrow t_2$. Every label $a$ in the valley between
$t_1$ and $t_2$ satisfies $t_1 \geqslant a$ or $t_2 \geqslant a$. Since
$s \to_{\PR(\RR)} t_1$ and $s \to_{\PR(\RR)} t_2$, it follows that
$s > t_1, t_2$. Hence $s > a$ for every label $a$ in the valley between
$t_1$ and $t_2$. Consequently, local decreasingness holds.
\item
In the remaining case we have a variable overlap. Assume without loss of
generality that $p_1 < p_2$. Let $x$ be the variable in $l_1$ whose
position is above $p_2 \setminus p_1$. Due to linearity of $l_1$ we have
$t_1 \to^* u \from t_2$ for some term $u$. The number of steps in the
sequence from $t_1$ to $u$ equals the number of occurrences of the
variable $x$ in $r_1$. If this number is not more than one then
local decreasingness is obtained as in the first case. If this number is
more than one then $l_1 \to r_1$ is duplicating and hence $s > t_1$. 
Therefore $s > a$ for every term $a$ in the sequence from $t_1$ to $u$.
Moreover $s > t_2$ or $s \geqslant t_2$. Hence also in this
case we have local decreasingness.
\qed
\end{enumerate}
\end{proof}

\begin{example}
Consider the TRS $\RR$ from \cite[p.28]{G96} consisting of the rewrite
rules
\begin{xalignat*}{3}
\m{f}(\m{g}(x))    & \to \m{f}(\m{h}(x,x))   &
\m{g}(\m{a})       & \to \m{g}(\m{g}(\m{a})) &
\m{h}(\m{a},\m{a}) & \to \m{g}(\m{g}(\m{a}))
\end{xalignat*}
The only critical pair 
$\m{f}(\m{g}(\m{g}(\m{a}))) \cp \m{f}(\m{h}(\m{a},\m{a}))$
is clearly joinable. The TRS $\PR(\RR) \cup \RR_\du$ consists of
the rewrite rules
\begin{xalignat*}{3}
\m{f}(\m{g}(\m{a})) & \to \m{f}(\m{h}(\m{a},\m{a})) &
\m{f}(\m{g}(\m{a})) & \to \m{f}(\m{g}(\m{g}(\m{a}))) &
\m{f}(\m{g}(x))     & \to \m{f}(\m{h}(x,x))
\end{xalignat*}
and can be shown to be relatively terminating with respect to
$\RR_\nd$ using the method described at the beginning of
Section~\ref{automation}.
Hence the confluence of $\RR$ is concluded by Theorem~\ref{dd-l1}.
\end{example}

The following example shows that left-linearity is essential in
Theorem~\ref{dd-l1}.

\begin{example}
Consider the non-left-linear TRS $\RR$
\begin{xalignat*}{3}
\m{f}(x,x) & \to \m{a}
& \m{f}(x,\m{g}(x)) & \to \m{b}
& \m{c} & \to \m{g}(\m{c}) 
\end{xalignat*}
from \cite{H80}.
Since $\PR(\RR) \cup \RR_\du$ is empty, termination of 
$(\PR(\RR) \cup \RR_\du)/\RR_\nd$ is trivial.
However, $\RR$ is not confluent because the term $\m{f}(\m{c},\m{c})$
has two distinct normal forms.
\end{example}

The termination of $(\PR(\RR) \cup \RR_\du)/\RR_\nd$ can be weakened to
the termination of $\PR(\RR)/\RR$. Since $\PR(\RR)$ is empty for
every orthogonal TRS $\RR$, we obtain a generalization of orthogonality.

In the proof of Theorem~\ref{dd-l1} we showed the local decreasingness 
of $\to_\RR$. The following example shows that this no longer holds
under the weakened termination assumption.

\begin{example}
Consider the orthogonal TRS $\RR$ consisting of the two rules
$\m{f}(x) \to \m{g}(x,x)$ and $\m{a} \to \m{b}$. Consider the local peak
$\m{f}(\m{b}) \from \m{f}(\m{a}) \to \m{g}(\m{a},\m{a})$.
There are two ways to complete the diagram:
\[
\tikzstyle{element}=[rectangle]
\tikzstyle{ll}=[text width=1pt,text height=0pt]
\begin{tikzpicture}
\node at (0,2) [element] (a) {$\m{f}(\m{a})$};
\node at (0,0) [element] (b) {$\m{f}(\m{b})$};
\node at (2,2) [element] (c) {$\m{g}(\m{a},\m{a})$};
\node at (2,1) [element] (d) {$\m{g}(\m{b},\m{a})$};
\node at (2,0) [element] (e) {$\m{g}(\m{b},\m{b})$};
\draw [->] (a) to (b) node[ll,pos=0.5,anchor=east]
  {\makebox[0mm][r]{$\scriptstyle \m{f}(\m{a})$}};
\draw [->] (a) to (c) node[ll,pos=0.5,anchor=south]
  {\makebox[0mm]{$\scriptstyle \m{f}(\m{a})$}};
\draw [->] (c) to (d) node[ll,pos=0.5,anchor=west]
  {$\scriptstyle \m{g}(\m{a},\m{a})$};
\draw [->] (d) to (e) node[ll,pos=0.5,anchor=west]
  {$\scriptstyle \m{g}(\m{b},\m{a})$};
\draw [->] (b) to (e) node[ll,pos=0.5,anchor=north]
  {\makebox[0mm]{$\scriptstyle \m{f}(\m{b})$}};
\end{tikzpicture}
\qquad\qquad
\begin{tikzpicture}
\node at (0,2) [element] (a) {$\m{f}(\m{a})$};
\node at (0,0) [element] (b) {$\m{f}(\m{b})$};
\node at (2,2) [element] (c) {$\m{g}(\m{a},\m{a})$};
\node at (2,1) [element] (d) {$\m{g}(\m{a},\m{b})$};
\node at (2,0) [element] (e) {$\m{g}(\m{b},\m{b})$};
\draw [->] (a) to (b) node[ll,pos=0.5,anchor=east]
  {\makebox[0mm][r]{$\scriptstyle \m{f}(\m{a})$}};
\draw [->] (a) to (c) node[ll,pos=0.5,anchor=south]
  {\makebox[0mm]{$\scriptstyle \m{f}(\m{a})$}};
\draw [->] (c) to (d) node[ll,pos=0.5,anchor=west]
  {$\scriptstyle \m{g}(\m{a},\m{a})$};
\draw [->] (d) to (e) node[ll,pos=0.5,anchor=west]
  {$\scriptstyle \m{g}(\m{a},\m{b})$};
\draw [->] (b) to (e) node[ll,pos=0.5,anchor=north]
  {\makebox[0mm]{$\scriptstyle \m{f}(\m{b})$}};
\end{tikzpicture}
\]
Since $\PR(\RR)$ is empty, neither of them is
extended locally decreasing
in the sense of the proof of Theorem~\ref{dd-l1}.
\end{example}

To address the problem, we first recall \emph{complete developments}.

\begin{definition}
Let $\RR$ be a TRS. The relation $\rda{}_\RR$ is inductively defined as
follows:
\begin{enumerate}[(1)]
\item
$x \rda{}_\RR x$ for all variables $x$,
\item
$f(\seq{s}) \rda{}_\RR f(\seq{t})$ if 
for each $i$ we have $s_i \rda{}_\RR t_i$, and
\item
$l\sigma \rda{}_\RR r\tau$ if $l \to r \in \RR$ and
$\sigma \rda{}_\RR \tau$.
\end{enumerate}
where $\sigma \rda{}_\RR \tau$ if $x\sigma \rda{}_\RR x\tau$
for all variables $x$.
\end{definition}

\begin{lemma}
\label{inclusion}
For every TRS $\RR$ we have
${\to_\RR} \subseteq {\rda{}_\RR} \subseteq {\to_\RR^*}$.
\qed
\end{lemma}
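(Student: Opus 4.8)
The plan is to prove the two inclusions separately, using two different inductions, after first recording an auxiliary fact. I would begin by observing that $\rda{}_\RR$ is \emph{reflexive}, i.e.\ $s \rda{}_\RR s$ for every term $s$. This is immediate by structural induction on $s$: clause~(1) handles variables and clause~(2) handles applications. In particular $\sigma \rda{}_\RR \sigma$ for every substitution $\sigma$, which is what makes the left inclusion work.

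For ${\to_\RR} \subseteq {\rda{}_\RR}$ I would argue by induction on the position $p$ at which the step $s \to_\RR t$ takes place. If $p = \epsilon$ then $s = l\sigma$ and $t = r\sigma$ for some rule $l \to r \in \RR$ and substitution $\sigma$; reflexivity gives $\sigma \rda{}_\RR \sigma$, so clause~(3) yields $l\sigma \rda{}_\RR r\sigma$. If the step takes place below the root then $s = f(\seq{s})$ with $s_i \to_\RR t_i$ for exactly one argument $i$ and $s_j = t_j$ otherwise; the induction hypothesis gives $s_i \rda{}_\RR t_i$, reflexivity gives $s_j \rda{}_\RR t_j$ for the remaining arguments, and clause~(2) assembles $s \rda{}_\RR t$.

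For ${\rda{}_\RR} \subseteq {\to_\RR^*}$ I would perform induction on the derivation of $s \rda{}_\RR t$ following the inductive definition. Clause~(1) is trivial. For clause~(2) the induction hypothesis yields $s_i \to_\RR^* t_i$ for each $i$, and rewriting the arguments one after another (using closure of $\to_\RR^*$ under contexts) gives $f(\seq{s}) \to_\RR^* f(\seq{t})$. For clause~(3) we have $l\sigma \rda{}_\RR r\tau$ with $\sigma \rda{}_\RR \tau$, and the induction hypothesis provides $x\sigma \to_\RR^* x\tau$ for every variable $x$.

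The only genuine work is in clause~(3). From the pointwise reductions $x\sigma \to_\RR^* x\tau$ I would first derive $l\sigma \to_\RR^* l\tau$ by a separate routine induction on the term $l$; here the sole delicate point is that a nonlinear $l$ may contain several occurrences of the same variable, but each occurrence is rewritten independently since $\to_\RR^*$ is closed under contexts. Applying the rule once then gives the desired chain $l\sigma \to_\RR^* l\tau \to_\RR r\tau$. This substitution step is the expected obstacle; everything else is bookkeeping.
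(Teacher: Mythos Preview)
Your proof is correct; the paper itself omits the argument entirely (the lemma is stated with an immediate \qed), treating both inclusions as folklore about complete developments. Your write-up supplies exactly the standard details one would expect, so there is nothing to compare.
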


The following lemma relates $\rda{}_\RR$ to $\to_{\PR(\RR)/\RR}$. It is
the key to prove our main result.

\begin{lemma}
\label{development}
Let $\RR$ be a TRS and $l \to r$ a left-linear rule in $\RR$.
If $l\sigma \rda{}_\RR t$ then one of the following conditions holds:
\begin{enumerate}[(a)]
\item
$t \in \{ l\tau, r\tau \}$ and $\sigma \rda{}_\RR \tau$ for some $\tau$,
\item
$l\sigma \to_{\PR(\RR)} \cdot \rda{}_{\RR} t$
and $l\sigma \to_{\PR(\RR)} r\sigma$.
\end{enumerate}
\end{lemma}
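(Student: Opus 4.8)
The plan is to analyse the derivation of $l\sigma \rda{}_\RR t$ by locating the contracted redexes relative to the pattern of $l$. Unfolding the inductive definition, a step $l\sigma \rda{}_\RR t$ amounts to contracting a set $U$ of redexes simultaneously, and every contracted redex occupies a position that is either in $\FPos(l)$ or at (respectively below) a variable position of $l$; since $l$ is left-linear, the images of the variables under the development can be collected into a single substitution without clashes. The decisive distinction I would draw is whether $U$ contains a redex at a function position of $l$ that constitutes a \emph{genuine} overlap with $l \to r$, that is, a position $u \in \FPos(l)$ with either $u \neq \epsilon$, or $u = \epsilon$ where the rule applied at the root is not a variant of $l \to r$.

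If no such redex exists, I would argue that condition (a) holds. Then every contracted redex sits at or below a variable position of $l$, with the sole possible exception of the root being contracted by a variant of $l \to r$ itself. In the first situation the development rewrites only the substitution, so collecting the developed images of the $x\sigma$ into a substitution $\tau$ yields $t = l\tau$ with $\sigma \rda{}_\RR \tau$. In the second, the root redex is contracted and, because a variant of $l \to r$ matches there, all remaining contracted redexes necessarily lie below the variable positions of $l$ and hence cannot be pattern redexes; this gives $t = r\tau$ with $\sigma \rda{}_\RR \tau$. Either way we land in case (a).

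Otherwise I would fix a position $u \in \FPos(l)$ yielding a genuine overlap, contracted by a rule $l'' \to r''$ with $l\sigma|_u = l''\sigma''$. Renaming $l'' \to r''$ apart from $l \to r$, the substitutions $\sigma$ and $\sigma''$ witness that $l''$ and $l|_u$ are unifiable, so a most general unifier $\mu$ exists and $(l'' \to r'', u, l \to r)_\mu$ is an overlap; consequently both $l\mu \to l\mu[r''\mu]_u$ and $l\mu \to r\mu$ belong to $\PR(\RR)$. Writing $l\sigma = l\mu\rho$ and instantiating, the second rule gives $l\sigma \to_{\PR(\RR)} r\sigma$ at once, and the first gives $l\sigma \to_{\PR(\RR)} l\sigma[r''\sigma'']_u$. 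The hard part will be to establish $l\sigma[r''\sigma'']_u \rda{}_\RR t$: the $\PR$ step contracts exactly the single redex at $u$, so I must show that developing the residuals of $U \setminus \{u\}$ in the resulting term reconstructs $t$. I expect this to be the main obstacle, and I would discharge it by the standard property that a development may be carried out by contracting one of its redexes first and then developing the remaining ones; in the left-linear setting this is supplied by the residual theory underlying $\rda{}_\RR$, and together with Lemma~\ref{inclusion} it delivers condition (b).
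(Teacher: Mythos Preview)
Your case analysis mirrors the paper's: distinguish whether some contracted redex sits at a function position of $l$ and constitutes a genuine overlap. The handling of case~(a) is essentially the same. The difference lies in case~(b), where you appeal to residual theory (``contract the redex at $u$ first, then develop the residuals of $U \setminus \{u\}$''), whereas the paper gives a direct, elementary argument that avoids any such machinery.

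The paper works not with the full set of contracted redexes but with the \emph{outermost} applications of clause~(3) in the derivation of $l\sigma \rda{}_\RR t$: writing $l\sigma = C[s_1,\dots,s_n] \rda{}_\RR C[t_1,\dots,t_n] = t$ with each $s_i \rda{}_\RR t_i$ derived by clause~(3), one simply picks some $p_1 \in \FPos(l)$. The sub-derivation at $p_1$ already has the form $s_1 = l_1\sigma \rda{}_\RR r_1\nu = t_1$ with $\sigma \rda{}_\RR \nu$, so after the $\PR(\RR)$-step to $C[r_1\sigma,s_2,\dots,s_n]$ one gets $r_1\sigma \rda{}_\RR r_1\nu = t_1$ from closure of $\rda{}_\RR$ under substitution, and the untouched $s_i \rda{}_\RR t_i$ for $i \geqslant 2$ reassemble via clause~(2) into a single $\rda{}_\RR$-step to $t$. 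No residual calculus is needed.

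Two concrete remarks on your version. First, you never argue that the chosen $u$ may be taken outermost; without this, the ``residuals of $U \setminus \{u\}$'' are not literally the sub-derivations sitting in the original derivation tree, and you do need a permutation lemma. Observing that any genuine-overlap position is dominated by an outermost one (as you can check: if $u$ were nested below an outermost redex at $v$, then $v \in \FPos(l)$ and $v$ itself yields a genuine overlap) lets you sidestep this entirely. Second, the lemma as stated assumes only that the single rule $l \to r$ is left-linear, not all of $\RR$; your phrase ``in the left-linear setting this is supplied by the residual theory'' tacitly assumes more than the hypotheses give you. The paper's direct argument requires no left-linearity of the rule $l_1 \to r_1$ applied at $p_1$, only closure of $\rda{}_\RR$ under substitution, so it proves the lemma in the generality stated.
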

\begin{proof}
We may write $l\sigma = C[\seq{s}] \rda{}_\RR C[\seq{t}] = t$ where
$s_i \rda{}_\RR t_i$ is obtained by case (3) in the definition of
$\rda{}_\RR$ for all $1 \leqslant i \leqslant n$. If $n = 0$ then
$l\sigma = t$ and hence we can take $\tau = \sigma$ to satisfy
condition (a). So let $n > 0$. 
Let $p_i$ be the position of $s_i$ in $l\sigma$.
We distinguish two cases.
\begin{itemize}
\item
Suppose that $\seq{p} \notin \FPos(l)$.
We define a substitution $\tau$ as follows.
For $x \in \Var(l)$ let $q$ be the (unique) position in $\VPos(l)$ such
that $l|_q = x$. Let $P = \{ p_i \mid p_i \geqslant q \}$ be the set
of positions in $l\sigma$ of those terms $\seq{s}$ that occur in
$\sigma(x)$. We define $\tau(x)$ as the term that is obtained from
$\sigma(x)$ by replacing for all $p_i \in P$ the subterm $s_i$ at position
$p_i \backslash q$ with $t_i$. We have $t = l\tau$ and
$\sigma \rda{}_\RR \tau$, so condition (a) is satisfied.
\item
In the remaining case at least one position among $\seq{p}$ belongs to
$\FPos(l)$. Without loss of generality we assume that $p_1 \in \FPos(l)$.
Since $s_1 \rda{}_\RR t_1$ is obtained by case (3), $s_1 = l_1\mu$ and
$t_1 = r_1\nu$ for some rewrite rule $l_1 \to r_1$ and substitutions
$\mu$ and $\nu$ with $\mu \rda{}_\RR \nu$. We assume that $l_1 \to r_1$
and $l \to r$ share no variables.
Hence we may assume that $\mu = \sigma$.
We distinguish two further cases.
\smallskip
\begin{itemize}
\item
If $l_1 \to r_1$ and $l \to r$ are variants and $p_1 = \epsilon$ then
$n = 1$, $C = \Box$, and 
$l\sigma = s_1 = l_1\sigma \rda{}_\RR r_1\nu = t$.
Because
$l_1 \to r_1$ and $l \to r$ are variants, there exists a substitution
$\tau$ such that $r\tau = r_1\nu$ and $\sigma \rda{}_\RR \tau$. So in this
case condition (a) is satisfied.
\item
If $l_1 \to r_1$ and $l \to r$ are not variants or $p_1 \neq \epsilon$
then there exists an overlap $(l_1 \to r_1, p_1, l \to r)_\theta$ such
that $l\sigma = l\sigma[l_1\sigma]_{p_1}$ is an instance of
$l\theta = l\theta[l_1\theta]_{p_1}$. The TRS $\PR(\RR)$ contains the
rules $l\theta \to l\theta[r_1\theta]_{p_1}$ and
$l\theta \to r\theta$. The latter rule is used to obtain
$l\sigma \to_{\PR(\RR)} r\sigma$.
An application of the former rule yields
$l\sigma \to_{\PR(\RR)} l\sigma[r_1\sigma]_{p_1}$. From
$\sigma \rda{}_\RR \nu$ we infer that $r_1\sigma \rda{}_\RR r_1\nu = t_1$.
Hence $l\sigma \to_{\PR(\RR)} l\sigma[r_1\sigma]_{p_1} \rda{}_\RR
l\sigma[t_1]_{p_1} = C[t_1,s_2,\dots,s_n] \rda{}_\RR^* C[t_1,\dots,t_n] =
t$. The $\rda{}_\RR$-steps can be combined into a single one and hence
condition (b) is satisfied.
\qed
\end{itemize}
\end{itemize}
\end{proof}

The following example shows the necessity of left-linearity in the
preceding lemma.

\begin{example}
Consider the TRS $\RR$ consisting of the rewrite rules
\begin{xalignat*}{3}
\m{f}(x,x) &\to \m{a} &
\m{g}(x) &\to x &
\m{a} &\to \m{b}
\end{xalignat*}
Let $l = \m{f}(x,x)$, $r = \m{a}$, $\sigma(x) = \m{g}(\m{a})$, and
$t = \m{f}(\m{a},\m{g}(\m{b}))$. We have $l\sigma \rda{}_\RR t$ but
$t$ satisfies neither condition in Lemma~\ref{development}.
\end{example}

The final preliminary lemma states some obvious closure properties.

\begin{lemma}
\label{closure}
\mbox{}
\begin{enumerate}
\item
If $t \rda{\dd}^*_s u$ then $C[t] \rda{\dd}^*_{C[s]} C[u]$.
\smallskip
\item
If $t \rda{\ddg}^=_s u$ then $C[t] \rda{\ddg}^=_{C[s]} C[u]$.
\smallskip
\item
Let ${\geqslant} = {\to_\RR^*}$ and 
${\geqslant} \cdot {>} \cdot {\geqslant} \subseteq {>}$.
If $s > t$ and $t \rda{}^* u$ then $t \rda{\dd}^*_s u$.
\end{enumerate}
\end{lemma}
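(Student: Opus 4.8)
The plan is to reduce all three parts to two elementary monotonicity observations and then read the claims off the definitions. The two observations are: (i) complete developments are closed under contexts, i.e.\ $a \rda{}_\RR b$ implies $C[a] \rda{}_\RR C[b]$; and (ii) the relations used for the self-labeling are closed under contexts, i.e.\ $a < s$ implies $C[a] < C[s]$ and $a \leqslant s$ implies $C[a] \leqslant C[s]$. Observation (i) is a routine induction on $C$: the base case $C = \Box$ is trivial, and for $C = f(\dots,C',\dots)$ one applies the induction hypothesis to $C'$, uses reflexivity of $\rda{}_\RR$ (which holds by clauses~(1) and~(2) of the definition) on the remaining arguments, and closes with clause~(2). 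Observation (ii) holds because $\geqslant = {\to_\RR^*}$ and $>$ is the transitive closure of a relative rewrite relation, and rewriting is monotonic, so both relations are preserved when placed inside a context.

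Given these, parts~1 and~2 are immediate. For part~1 I would write the given sequence as $t = a_0 \rda{} \cdots \rda{} a_k = u$ where each source satisfies $a_i < s$. Applying the context $C$ step by step, observation (i) yields $C[a_i] \rda{} C[a_{i+1}]$ while observation (ii) upgrades each label condition to $C[a_i] < C[s]$; hence $C[t] \rda{\dd}^*_{C[s]} C[u]$. For part~2 the relation $\rda{\ddg}^=_s$ permits at most one step whose source is $\leqslant s$ or $< s$. If $t = u$ there is nothing to prove; otherwise a single step $t \rda{} u$ with $t \leqslant s$ or $t < s$ maps under $C$ to $C[t] \rda{} C[u]$ with, respectively, $C[t] \leqslant C[s]$ or $C[t] < C[s]$, which is exactly $C[t] \rda{\ddg}^=_{C[s]} C[u]$.

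For part~3 I would use Lemma~\ref{inclusion} together with the compatibility condition $\geqslant \cdot > \cdot \geqslant \subseteq {>}$. Write the development $t = b_0 \rda{} \cdots \rda{} b_m = u$. By Lemma~\ref{inclusion} we have ${\rda{}_\RR} \subseteq {\to_\RR^*} = {\geqslant}$, so each prefix gives $t \geqslant b_i$. Since $\geqslant$ is reflexive, the chain $s \geqslant s > t \geqslant b_i$ is available, and the compatibility condition collapses it to $s > b_i$ for every $i$. Thus every step of the sequence has its source strictly below $s$, which is precisely $t \rda{\dd}^*_s u$.

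I do not expect a genuine obstacle here; the lemma is closure bookkeeping for the self-labeling. The two points that require care are deriving observation (i) cleanly from the inductive definition (in particular invoking reflexivity of $\rda{}_\RR$ on the sibling subterms), and, in part~3, the padding step that absorbs the leading development $t \rda{}^* b_i$ into the strict comparison: this is the only place where the hypothesis $\geqslant \cdot > \cdot \geqslant \subseteq {>}$ is used, and it is essential that $\geqslant$ be reflexive so that $s > t$ can be prefixed and suffixed by $\geqslant$-steps before the law is applied.
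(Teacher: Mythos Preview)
Your proposal is correct and is precisely the routine verification the paper intends: the paper's own proof is the single word ``Straightforward''. Your two observations---closure of $\rda{}_\RR$ under contexts via clauses~(1) and~(2), and closure of $>$ and $\geqslant$ under contexts because they are built from rewrite relations---are exactly the ingredients needed, and your handling of part~3 via $\rda{}_\RR \subseteq {\to_\RR^*} = {\geqslant}$ together with the compatibility law is the intended argument. One small remark: parts~1 and~2 tacitly presuppose that $>$ and $\geqslant$ are closed under contexts, which is not stated in the lemma but holds for the particular orders ${>} = {\to_{\PR(\RR)/\RR}^+}$ and ${\geqslant} = {\to_\RR^*}$ used in Theorem~\ref{dd-l2}; you correctly make this explicit.
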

\begin{proof}
Straightforward.
\qed
\end{proof}

After these preliminaries we are ready for the main result.

\begin{theorem}
\label{dd-l2}
A left-linear TRS $\RR$ is confluent if 
${\cp} \subseteq {\downarrow}$ and $\PR(\RR)/\RR$ is terminating.
\end{theorem}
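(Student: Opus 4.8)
The plan is to establish confluence of the complete-development relation $\rda{}_\RR$ rather than of $\to_\RR$ itself. This is harmless: by Lemma~\ref{inclusion} we have ${\to_\RR} \subseteq {\rda{}_\RR} \subseteq {\to_\RR^*}$, so $\rda{}_\RR$ and $\to_\RR$ share the same reflexive--transitive closure and hence one is confluent if and only if the other is. As in the proof of Theorem~\ref{dd-l1} I would label every $\rda{}_\RR$-step by its \emph{starting term} and compare labels with the strict order ${>} = {\to_{\PR(\RR)/\RR}^+}$ and the quasi-order ${\geqslant} = {\to_\RR^*}$. The order $>$ is well-founded exactly because $\PR(\RR)/\RR$ is terminating, and ${\geqslant} \cdot {>} \cdot {\geqslant} \subseteq {>}$ holds because additional $\to_\RR$-steps at either end of a $\to_{\PR(\RR)/\RR}^+$-sequence are absorbed by the relative rewrite relation. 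By Theorem~\ref{dde} it then suffices to prove that every local peak $s \rda{}_\RR t_1$, $s \rda{}_\RR t_2$ (both steps carrying the label $s$) is extended locally decreasing, and I would prove this by induction on $s$ with respect to the proper subterm order.

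If $s$ is a variable the peak is trivial. If both developments use a root congruence, then $s = f(\seq{s})$, $t_1 = f(\seq{u})$ and $t_2 = f(\seq{v})$ with $s_i \rda{}_\RR u_i$ and $s_i \rda{}_\RR v_i$; the induction hypothesis closes each coordinate peak, and Lemma~\ref{closure}(1) and~(2) lift the resulting valleys into the surrounding $f$-context. Since every term produced is a $\to_\RR^*$-reduct of $s$ and hence $\leqslant s$, the strictly labelled steps stay strictly below $s$ and the equal-level steps stay at level $\leqslant s$. The remaining case is that one development, say the one yielding $t_1$, contracts a root redex with a left-linear rule $l \to r$, so that $s = l\sigma$ and $t_1 = r\tau_1$ with $\sigma \rda{}_\RR \tau_1$. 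Here I would apply Lemma~\ref{development} to the other development $l\sigma \rda{}_\RR t_2$. Its case~(a) puts $t_2$ again in the form $l\tau_2$ or $r\tau_2$ with $\sigma \rda{}_\RR \tau_2$, reducing the peak to the substitution peaks $x\sigma \rda{}_\RR x\tau_1$, $x\sigma \rda{}_\RR x\tau_2$, which sit at proper subterms of $s$ and are closed by the induction hypothesis, the common reducts then being substituted back into $r$ (or $l$). Its case~(b) supplies the two critical-pair steps $s \to_{\PR(\RR)} r\sigma$ and $s \to_{\PR(\RR)} \cdot \rda{}_\RR t_2$, i.e.\ a genuine critical overlap.

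The two real difficulties are duplication and the critical-overlap case, and they are precisely where the relative-termination hypothesis and the use of complete developments pay off. Because $r$ may copy a variable, the valley obtained for a substitution peak in case~(a) must be replicated at several parallel positions, whereas local decreasingness permits only one equal-level step per side; I would therefore discharge all strictly decreasing parts in context first and then fuse the single equal-level sub-steps, which occur at pairwise parallel positions, into one complete development of label $\leqslant s$. This fusion is possible only because $\rda{}_\RR$ contracts parallel redexes simultaneously, which is exactly why complete developments are used instead of parallel steps. In case~(b) a single $\PR(\RR)$-step already forces $s > t_1$ and $s > t_2$, so by Lemma~\ref{closure}(3) \emph{every} subsequent $\rda{}_\RR$-reduction of $t_1$ or $t_2$ counts as a $\xrightarrow{\dd}_s$-step; it therefore only remains to exhibit a common reduct of $t_1$ and $t_2$ at all, and here the hypothesis ${\cp} \subseteq {\downarrow}$ enters: the critical pair of the overlap is joinable, and since joinability is preserved under substitution the join applies uniformly to the actual instance, after which the induction hypothesis on the substitution peaks merges the remaining differences. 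I expect this last step---constructing the valley for $t_1$ and $t_2$ from critical-pair joinability alone, rather than from a confluence assumption, while keeping all steps strictly below $s$---to be the main obstacle, with the duplication bookkeeping of case~(a) a close second.
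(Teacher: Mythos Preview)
Your setup matches the paper exactly: prove confluence of $\rda{}_\RR$, use the source labeling with ${>} = {\to_{\PR(\RR)/\RR}^+}$ and ${\geqslant} = {\to_\RR^*}$, and invoke Theorem~\ref{dde}. The congruence case and case~(a) of Lemma~\ref{development} are also handled essentially as in the paper.

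The gap is in the induction order and, consequently, in case~(b). You induct on the proper subterm order $\rhd$ alone; the paper inducts on the well-founded order $({>} \cup {\rhd})^+$. In case~(b) the terms $r\sigma$, $u'$, $t_1 = r\tau$ and $t_2 = u$ are in general \emph{not} proper subterms of $s$, so your induction hypothesis is unavailable for them. Concretely, from ${\cp} \subseteq {\downarrow}$ you get $r\sigma \to^* v' \mathrel{{}^*{\from}} u'$, but you still face the two peaks $t_1 \lda{} r\sigma \rda{}^* v'$ and $v' \mathrel{{}^*{\lda{}}} u' \rda{} t_2$. These are not ``substitution peaks'' and their sources are not subterms of $s$; local confluence alone does not join them, and your proposed ``induction hypothesis on the substitution peaks'' simply does not apply here.

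The paper's remedy is precisely the stronger induction order. Because $s \to_{\PR(\RR)} r\sigma$ and $s \to_{\PR(\RR)} u'$, every term $v$ occurring in the diagram below $r\sigma$ or $u'$ satisfies $s > v$. The induction hypothesis for $({>} \cup {\rhd})^+$ then yields $\LD_{(>,\geqslant)}(v,v)$ for \emph{all} $v < s$, so the ARS $\langle \TERM, \{\rda{}_v\}_{v<s}\rangle$ is extended locally decreasing and hence, by Theorem~\ref{dde}, the relation $\rda{\dd}_s = \bigcup_{v<s}\rda{}_v$ is confluent. This confluence (not merely local confluence) is what the paper uses to merge $t_1$, $v'$ and $t_2$ into a single common reduct, with all steps automatically carrying labels $< s$. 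You anticipated that avoiding a confluence assumption would be the main obstacle; the paper's point is that you do not avoid it---you derive it, for the restricted relation $\rda{\dd}_s$, from the strengthened induction hypothesis.
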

\begin{proof}
Because of Lemma~\ref{inclusion}, it is sufficient to prove 
confluence of $\rda{}_\RR$.
We show that the relation $\rda{}_\RR$ is extended locally decreasing
with respect to the source labeling. Labels are compared with
respect to the strict order ${>} = {\to_{\PR(\RR)/\RR}^+}$ and
the quasi-order ${\geqslant} = {\to_\RR^*}$. We show that
\[
{\mathrel{\test{\lda{}}{}{s}} \cdot \rda{}^{}_s}
\:\subseteq\:
{\rda{\ddg}^=_s
\cdot \rda{\dd}^*_s
\cdot \mathrel{\test{\lda{\dd}}{*}{s}}
\cdot \mathrel{\test{\lda{\ddg}}{=}{s}}}
\]
for all terms $s$ by well-founded induction on the order
$({>} \cup {\rhd})^+$.
In the base case $s$ is a variable and the inclusion trivially holds.
Let $s = f(\seq{s})$.
Suppose $t \lda{} s \rda{} u$.
We distinguish the following cases, depending on the derivation of
$s \rda{} t$ and $s \rda{} u$.
\begin{itemize}
\item
Neither $s \rda{} t$ nor $s \rda{} u$ is obtained by $(1)$, because 
$s$ is not a variable.
Suppose both $s \rda{} t$ and $s \rda{} u$ are obtained by $(2)$.
Then $t$ and $u$ can be written as $f(\seq{t})$ and $f(\seq{u})$.
Fix $i \in \{ 1, \dots, n \}$. We have $t_i \lda{} s_i \rda{} u_i$.
By the induction hypothesis there exist $t_i'$, $u_i'$, and $v_i$ such
that
\[
t_i \rda{\ddg}^=_{s_i} t_i' \rda{\dd}^*_{s_i}
v_i \mathrel{\test{\lda{\dd}}{*}{s_i}}
u_i' \mathrel{\test{\lda{\ddg}}{=}{s_i}}
u_i
\]
With repeated applications of Lemma~\ref{closure}(1,2) we obtain
\[
t
\rda{\ddg}^=_s f(\seq{t'}) \rda{\dd}_s^*
f(\seq{v}) \mathrel{\test{\lda{\dd}}{*}{s}}
f(\seq{u'}) \mathrel{\test{\lda{\ddg}}{=}{s}}
u
\]
\item
Suppose $s \rda{} t$ or $s \rda{} u$ is obtained by $(3)$.
Without loss of generality we assume that
$s \rda{} t$ is obtained by $(3)$, i.e.,
$s = l\sigma$, $t = r\tau$, and $\sigma \rda{} \tau$.
Following Lemma~\ref{development}, we distinguish the following two cases
for $l\sigma \rda{} u$.
\begin{itemize}
\item
Suppose $u \in \{ l\mu, r\mu \}$ for some $\mu$ with $\sigma \rda{} \mu$.
Fix $x \in \Var(l)$. We have $x\tau \lda{} x\sigma \rda{} x\mu$.
By the induction hypothesis there exist terms 
$t_x$, $u_x$, and $v_x$
such that
\[
x\tau \rda{\ddg}^=_{x\sigma}
t_x \rda{\dd}_{x\sigma}^*
v_x \mathrel{\test{\lda{\dd}}{*}{x\sigma}}
u_x \mathrel{\test{\lda{\ddg}}{=}{x\sigma}}
x\mu
\]
Define substitutions $\tau'$, $\nu$, and $\mu'$ as follows:
$\tau'(x) = t_x$,
$\nu(x) = v_x$, and
$\mu'(x) = u_x$ for all $x \in \Var(l)$, and
$\tau'(x) = \nu(x) = \mu'(x) = x$ for all $x \notin \Var(l)$.
We obtain
\[
t \rda{\ddg}^=_s
r\tau' \rda{\dd}_s^*
r\nu \mathrel{\test{\lda{\dd}}{*}{s}}
r\mu' \mathrel{\test{\lda{\ddg}}{=}{s}}
u
\]
by repeated applications of Lemma~\ref{closure}(1,2).
\item
In the remaining case we have
$s \to_{\PR(\RR)} u' \rda{} u$ for some term $u'$ as well as
$s \to_{\PR(\RR)} r\sigma$. Clearly $r\sigma \rda{} r\tau = t$.
Since $\RR$ is locally confluent (due to
${\cp} \subseteq {\downarrow}$),
there exists a term $v'$ such that
$r\sigma \to^* v' \mathrel{\test{\from}{*}{}} u'$
and thus also $r\sigma \rda{}^* v' \mathrel{\test{\lda{}}{*}{}} u'$.
We have $s > r\sigma$ and $s > u'$.
Lemma~\ref{closure}(3) ensures that
$r\sigma \rda{\dd}^*_s v' \mathrel{\test{\lda{\dd}}{*}{s}} u'$.
For every term $v$ with $s > v$ we have
\[
{\mathrel{\test{\lda{}}{}{v}} \cdot \rda{}^{}_v}
\:\subseteq\: 
{\rda{\ddg}^=_v
\cdot \rda{\dd}^*_v
\cdot \mathrel{\test{\lda{\dd}}{*}{v}}
\cdot \mathrel{\test{\lda{\ddg}}{=}{v}}}
\]
by the induction hypothesis. Hence the ARS
$\langle \TERM, \{ \rda{}_v \}_{v < s} \rangle$ is locally
decreasing and therefore the relation
\[
\rda{\dd}^{}_s ~=~ \bigcup_{v < s} \rda{}_v
\]
is confluent. This is used to obtain the diagram
\[
\tikzstyle{element}=[rectangle]
\tikzstyle{ll}=[text width=0pt,text height=0pt]
\begin{tikzpicture}[scale=1.4]
\node at (0,4) [element] (s)  {$s$};
\node at (0,2) [element] (t') {$r\sigma$};
\node at (0,0) [element] (t)  {$t$};
\node at (2,0) [element] (w)  {$\cdot$};
\node at (2,4) [element] (u') {$u'$};
\node at (4,4) [element] (u)  {$u$};
\node at (2,2) [element] (v') {$v'$};
\node at (4,0) [element] (v)  {$\cdot$};
\draw [->] (s)  to (t');
\draw [->] (t') to (t) 
  node[pos=0.5] {$\circ$}
  node[pos=0.5,anchor=east] {\raisebox{2mm}{$\dD{180}$}}
  node[ll,pos=0.5,anchor=west] {\footnotesize $s$};
\draw [->] (s)  to (u');
\draw [->] (u') to (u) 
  node[pos=0.5] {$\circ$} 
  node[pos=0.5,anchor=south] {\raisebox{.5mm}{$\dd$}}
  node[pos=0.5,anchor=north] {\raisebox{-2mm}{\footnotesize $s$}};
\draw [->>] (t') to (v') 
  node[pos=0.5] {$\circ$}
  node[pos=0.5,anchor=south] {\raisebox{.5mm}{$\dd$}}
  node[pos=0.5,anchor=north] {\raisebox{-2mm}{\footnotesize $s$}};
\draw [->>] (v') to (w)
  node[pos=0.5] {$\circ$}
  node[pos=0.5,anchor=east] {\raisebox{2mm}{$\dD{180}$}}
  node[ll,pos=0.5,anchor=west] {\footnotesize $s$};
\draw [->>] (u') to (v') 
  node[pos=0.5] {$\circ$}
  node[pos=0.5,anchor=east] {\raisebox{2mm}{$\dD{180}$}}
  node[ll,pos=0.5,anchor=west] {\footnotesize $s$};
\draw [->>] (t)  to (w)
  node[pos=0.5] {$\circ$}
  node[pos=0.5,anchor=south] {\raisebox{.5mm}{$\dd$}}
  node[pos=0.5,anchor=north] {\raisebox{-2mm}{\footnotesize $s$}};
\draw [->>] (w)  to (v)
  node[pos=0.5] {$\circ$}
  node[pos=0.5,anchor=south] {\raisebox{.5mm}{$\dd$}}
  node[pos=0.5,anchor=north] {\raisebox{-2mm}{\footnotesize $s$}};
\draw [->>] (u)  to (v)
  node[pos=0.5] {$\circ$}
  node[pos=0.5,anchor=east] {\raisebox{2mm}{$\dD{180}$}}
  node[ll,pos=0.5,anchor=west] {\footnotesize $s$};
\node[anchor=east] at (0,3) {$\scriptstyle \PR(\RR)$};
\node[anchor=north] at (1,4.35) {$\scriptstyle \PR(\RR)$};
\node at (3,2) {CR($\rda{\dd}^{}_s$)};
\node at (1,1) {CR($\rda{\dd}^{}_s$)};
\end{tikzpicture}
\]
from which we conclude that
$t \rda{\dd}^*_s \cdot \mathrel{\test{\lda{\dd}}{*}{s}} u$.
\end{itemize}
\end{itemize}
\qed
\end{proof}

\begin{example}
\label{main-example}
Suppose we extend the TRS $\RR$ of Example~\ref{GL} by the rewrite rule
\[
\m{d}(x : y) \to  x : (x : \m{d}(y)) 
\]
The resulting TRS $\RR'$ has the same critical pair as $\RR$ and
$\PR(\RR')$ consists of
\begin{xalignat*}{2}
\m{inc}(\m{tl}(\m{nat})) &\to \m{tl}(\m{inc}(\m{nat})) &
\m{inc}(\m{tl}(\m{nat})) &\to \m{inc}(\m{tl}(\m{0} : \m{inc}(\m{nat}))) 
\end{xalignat*}
By taking the matrix interpretation (\cite{EWZ08})
\begin{xalignat*}{3}
\m{inc}_\MM(\vec{x}) & =
\begin{pmatrix} 1 & 0 \\ 1 & 0 \end{pmatrix} \vec{x} 
&
\m{hd}_\MM(\vec{x}) & = \vec{x}
&
\m{0}_\MM & = \begin{pmatrix} 0 \\ 0 \end{pmatrix}
\\
\m{nat}_\MM & =
\begin{pmatrix} 0 \\ 1 \end{pmatrix} 
&
\m{tl}_\MM(\vec{x}) & =
\begin{pmatrix} 1 & 1 \\ 1 & 0 \end{pmatrix} \vec{x} 
&
\m{s}_\MM(\vec{x}) & =
\begin{pmatrix} 1 & 1 \\ 0 & 0 \end{pmatrix} \vec{x}
\\
\m{d}_\MM(\vec{x}) & = 
\begin{pmatrix} 1 & 1 \\ 1 & 1 \end{pmatrix} \vec{x}
&
\m{:}_\MM(\vec{x},\vec{y}) & =
\begin{pmatrix} 1 & 1 \\ 1 & 1 \end{pmatrix} \vec{x} + \vec{y}
\end{xalignat*}
we obtain ${\RR' \subseteq {\geqslant_\MM}}$ and
$\PR(\RR') \subseteq {>_\MM}$:
\[
[\m{inc}(\m{tl}(\m{nat}))]_\MM =
\begin{pmatrix}
1 \\ 1
\end{pmatrix}
>
\begin{pmatrix}
0 \\ 0
\end{pmatrix}
= [\m{tl}(\m{inc}(\m{nat}))]_\MM
= [\m{inc}(\m{tl}(\m{0} : \m{inc}(\m{nat})))]_\MM
\]
Hence $\PR(\RR')/\RR'$ is terminating and Theorem~\ref{dd-l2} yields
the confluence of $\RR'$. Note that Theorem~\ref{dd-l1} does not apply
because $\PR(\RR') \cup \RR'_\du$ is not relatively terminating
with respect to $\RR_\nd$; consider the term $\m{d}(\m{nat})$.
\end{example}

The next example explains why one cannot replace $\PR(\RR)$ by
\[
\PR'(\RR) = \{ l_2\mu \to r_2\mu \mid
\text{$(l_1 \to r_1, p, l_2 \to r_2)_\mu$ is an overlap of $\RR$} \}
\]

\begin{example}
\label{counterexample2}
Consider the non-confluent TRS
\begin{xalignat*}{4}
\m{f}(\m{a}) &\to \m{c} &
\m{f}(\m{b}) &\to \m{d} &
\m{a} &\to \m{b} &
\m{b} &\to \m{a}
\end{xalignat*}
$\PR'(\RR)$ consists of
\begin{xalignat*}{2}
\m{f}(\m{a}) &\to \m{c} &
\m{f}(\m{b}) &\to \m{d}
\end{xalignat*}
and it is easy to see that $\PR'(\RR)/\RR$ is terminating.
\end{example}

An easy extension of our main result is obtained by excluding critical
pair steps from $\PR(\RR)$ that give rise to trivial critical pair steps.
The proof is based on the observation that Lemma~\ref{development} still
holds for this modification of $\PR(\RR)$.

\section{Automation}
\label{automation}

Concerning the automation of Theorem~\ref{dd-l2}, for checking relative
termination we use the following criteria of Geser~\cite{G90}:

\begin{theorem}
For TRSs $\RR$ and $\SS$, $\RR/\SS$ is terminating if
\begin{enumerate}
\item
$\RR = \varnothing$,
\item
$\RR \cup \SS$ is terminating, or
\item
there exist a well-founded order $>$ and a quasi-order $\geqslant$ such
that
$>$ and $\geqslant$ are closed under contexts and substitutions,
${\geqslant} \cdot {>} \cdot {\geqslant} \subseteq {>}$,
$\RR \cup \SS \subseteq {\geqslant}$, and
$(\RR \setminus {>})/(\SS \setminus {>})$ is terminating.
\end{enumerate}
\end{theorem}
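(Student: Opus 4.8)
The plan is to prove each of the three conditions separately, relying throughout on the standard reformulation of the definition: $\RR/\SS$ is terminating (i.e.\ $\to_\SS^* \cdot \to_\RR \cdot \to_\SS^*$ is well-founded) if and only if there is no infinite $\to_{\RR \cup \SS}$ sequence containing infinitely many $\to_\RR$ steps. Every infinite descending chain in $\to_\SS^* \cdot \to_\RR \cdot \to_\SS^*$ unfolds into such a sequence, and conversely any infinite $\to_{\RR\cup\SS}$ sequence with infinitely many $\RR$-steps can be grouped into an infinite chain of $\to_\SS^* \cdot \to_\RR \cdot \to_\SS^*$.

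Cases (1) and (2) are immediate. If $\RR = \varnothing$ then $\to_\RR$ is empty, hence so is $\to_\SS^* \cdot \to_\RR \cdot \to_\SS^*$, which is trivially well-founded. If $\RR \cup \SS$ is terminating then $\to_{\RR\cup\SS}$ admits no infinite chain, so neither does its transitive closure $\to_{\RR\cup\SS}^+$; since $\to_\SS^* \cdot \to_\RR \cdot \to_\SS^* \subseteq \to_{\RR\cup\SS}^+$, the claim follows.

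The substance is in case (3), which I would prove by contradiction. Assume an infinite sequence $t_0 \to_{\RR\cup\SS} t_1 \to_{\RR\cup\SS} t_2 \to_{\RR\cup\SS} \cdots$ with infinitely many $\RR$-steps. Because $\RR \cup \SS \subseteq {\geqslant}$ and $\geqslant$ is closed under contexts and substitutions, each step satisfies $t_i \geqslant t_{i+1}$; moreover any step applying a rule $l \to r$ with $l > r$ is a genuine $>$-step, by closure of $>$. The key step is to show that only finitely many steps can be $>$-steps. If instead there were $>$-steps at infinitely many indices $i_1 < i_2 < \cdots$, then for each $k$ we have $t_{i_k} > t_{i_k+1}$ while the intervening steps give $t_{i_k+1} \geqslant t_{i_{k+1}}$ (using transitivity and reflexivity of the quasi-order $\geqslant$). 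Applying the compatibility assumption ${\geqslant} \cdot {>} \cdot {\geqslant} \subseteq {>}$ with a reflexive left factor yields $t_{i_k} > t_{i_{k+1}}$ for every $k$, and hence an infinite $>$-descending chain $t_{i_1} > t_{i_2} > \cdots$, contradicting well-foundedness of $>$.

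Consequently the sequence has a tail containing no $>$-steps. In that tail no rule $l \to r$ with $l > r$ can occur (such a rule would force a $>$-step), so every applied rule lies in $\RR \setminus {>}$ or $\SS \setminus {>}$. Since the entire sequence contains infinitely many $\RR$-steps, the tail is an infinite $\to_{(\RR \setminus {>}) \cup (\SS \setminus {>})}$ sequence with infinitely many $(\RR \setminus {>})$-steps, contradicting the assumed termination of $(\RR \setminus {>})/(\SS \setminus {>})$. This completes case (3). I expect the finiteness-of-$>$-steps argument, and in particular the correct use of the compatibility condition ${\geqslant} \cdot {>} \cdot {\geqslant} \subseteq {>}$ to chain local $>$-steps across intervening $\geqslant$-steps, to be the only delicate point; the remaining bookkeeping is routine.
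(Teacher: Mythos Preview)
The paper does not actually prove this theorem; it is stated without proof and attributed to Geser~\cite{G90}. So there is no paper proof to compare against.

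Your argument is correct and is essentially the standard proof of this result. The reformulation of relative termination in terms of infinite $\to_{\RR\cup\SS}$ sequences with infinitely many $\RR$-steps is the right starting point, cases (1) and (2) are indeed immediate, and your handling of case (3) is sound: closure under contexts and substitutions transports the rule-level orientations to the step level, the compatibility condition ${\geqslant} \cdot {>} \cdot {\geqslant} \subseteq {>}$ (together with reflexivity of $\geqslant$) lets you collapse any $>$-step followed by a block of $\geqslant$-steps into a single $>$-step, well-foundedness of $>$ then forces only finitely many $>$-steps, and the tail yields the required contradiction with termination of $(\RR \setminus {>})/(\SS \setminus {>})$. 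The only cosmetic remark is that you could make explicit that an $\RR$-step in the tail is automatically an $(\RR \setminus {>})$-step because no rule oriented by $>$ can occur there; you do say this, but it is the one place a reader might pause.
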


Based on this result, termination of $\PR(\RR)/\RR$ is shown by repeatedly
using the last condition to simplify $\PR(\RR)$ and $\RR$. As soon as the
first condition applies, termination is concluded. If the first condition
does not apply and the third condition does not make progress, we try to
establish termination of $\PR(\RR) \cup \RR$ using the termination tool
\TTTT~\cite{TTT2}. For checking the third condition we use matrix
interpretations~\cite{EWZ08}. 

In the remainder of this section we show how to implement
Theorem~\ref{dd-ll}. We start by observing that the condition of
Theorem~\ref{dd-ll} is undecidable even for locally confluent 
TRSs.\footnote{Contradicting the claim in \cite[Section~4.2]{vO08}.}

\begin{lemma}
The following decision problem is undecidable:
\begin{center}
\begin{tabular}{l@{\quad}p{60ex}}
instance: & a finite locally confluent linear TRS $\RR$, \\
question: & are all critical pairs locally decreasing with respect to
the rule-labeling heuristic? \\
\end{tabular}
\end{center}
\end{lemma}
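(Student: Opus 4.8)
The plan is to establish undecidability by a many-one reduction from the halting problem for Turing machines. First note that the problem is recursively enumerable: since $\RR$ is finite there are only finitely many strict orders on its rules, and for each fixed order the inclusion of Theorem~\ref{dd-ll} merely asserts the existence of joining sequences matching a fixed label pattern, for which one can search. It therefore suffices to arrange that halting machines are mapped to systems all of whose critical pairs are locally decreasing, and non-halting machines to systems possessing a critical pair that is locally decreasing under no order.

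Given a deterministic Turing machine $M$, I would first construct a \emph{linear} TRS $\SS$ that simulates the computation of $M$ one step at a time. Representing configurations as strings over unary function symbols turns every transition into a rule of the form $\m{f}(x) \to \m{g}(x)$ in which the single variable occurs once on each side, so linearity is automatic; moreover determinism lets me keep $\SS$ non-overlapping, so that $\SS$ by itself has no critical pairs and its rewrite relation is deterministic and hence confluent. Reaching a halting configuration is recorded by reachability, from the initial term $t_0$, of a term $h$ carrying a dedicated symbol.

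On top of $\SS$ I would add a small gadget of fresh rules, among them two rules $\m{S} \to t_0$ and $\m{S} \to w$ with a common left-hand side, whose overlap produces a single designated critical pair $t_0 \cp w$. The gadget must meet two constraints simultaneously. For \emph{local confluence} (which must hold unconditionally, so that the instance is legal) I provide a default joining route for $t_0$ and $w$ that always exists; the point is to make this route pass through $\m{S}$ again, so that it necessarily re-applies the critical-pair rules several times. The decreasing pattern of Theorem~\ref{dd-ll} permits, on each side of the valley, at most one step whose label is not strictly below the governing rule, and since a rule is never below itself, a route that uses $\m{S} \to t_0$ or $\m{S} \to w$ beyond this allowance can never be fitted into a decreasing diagram, \emph{no matter how the rules are ordered}; the default route is thus irreparably non-decreasing. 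The halting computation, by contrast, joins $t_0$ and $w$ entirely within $\SS$ via $t_0 \to_\SS^* h$ and $w \to_\SS^* h$, which becomes a decreasing valley as soon as the $\SS$-rules are placed below $\m{S} \to t_0$ and $\m{S} \to w$. Hence the designated critical pair is locally decreasing for some order precisely when $t_0$ and $w$ are $\SS$-joinable, that is, precisely when $M$ halts.

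The main obstacle is to rule out the trivialising escape inherent in the rule-labeling heuristic: if the designated pair were joinable using only auxiliary rules, one could place all of them below the two critical-pair rules, turning every joining step into a $\dd$-step and making the pair locally decreasing for free, so that local decreasingness would collapse to mere joinability and the problem would be decidable. The construction must therefore force every unconditional join of $t_0$ and $w$ to overuse the critical-pair rules, whose contribution to any decreasing diagram is bounded to one step per side and cannot be increased by reordering. The delicate points are (i) verifying that adding the single permitted $\m{S}\to t_0$- and $\m{S}\to w$-step to $\SS$-reachability creates no spurious decreasing join when $M$ diverges, and (ii) checking that the few auxiliary critical pairs contributed by the gadget are themselves locally decreasing under the chosen order, so that the failure of decreasingness is localised to the designated pair. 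Establishing that a diverging computation defeats every rule order---which reduces the analysis to the single most permissive order in which all $\SS$-rules are minimal---is the heart of the argument.
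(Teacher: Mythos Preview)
Your high-level plan---reduce from an undecidable problem, arrange that the constructed TRS is always locally confluent, and make the designated critical pair decreasingly joinable iff the source instance is positive---is exactly the shape of the paper's argument. The paper, however, reduces from \emph{ground joinability in a linear non-overlapping TRS} (itself undecidable via the standard Turing-machine encoding) rather than directly from halting, and its gadget is completely explicit: given $\SS$ and ground terms $s,t$, add fresh constants $\m{a},\m{b}$ and the four rules
\[
\m{a}\to s,\qquad \m{b}\to t,\qquad \m{a}\to\m{b},\qquad \m{b}\to\m{a}.
\]
The critical pairs $(s,\m{b})$ and $(t,\m{a})$ are always joinable (e.g.\ $\m{b}\to\m{a}\to s$), so local confluence holds unconditionally. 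If $s\downarrow_\SS t$ one places all four new rules above all $\SS$-rules and every pair becomes decreasing. If $s\not\downarrow_\SS t$, the only joins of $(s,\m{b})$ go through $\m{b}\to\m{a}\to s$ and the only joins of $(t,\m{a})$ through $\m{a}\to\m{b}\to t$; a short case analysis then shows that the first pair forces $(\m{b}\to\m{a})<(\m{a}\to\m{b})$ or a chain ending in $(\m{a}\to s)<(\m{a}\to\m{b})$, while the second pair forces the symmetric constraints with $\m{a}$ and $\m{b}$ swapped, and the two are jointly unsatisfiable.

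The gap in your proposal is precisely the gadget and the accompanying impossibility argument. Your claim that ``a route that uses $\m{S}\to t_0$ or $\m{S}\to w$ beyond this allowance can never be fitted into a decreasing diagram, no matter how the rules are ordered'' is not correct as stated: if one declares $(\m{S}\to t_0)<(\m{S}\to w)$ then the rule $\m{S}\to t_0$ may occur \emph{arbitrarily often} on either side of the valley (it lies in $\xrightarrow{\dd}_{\alpha\beta}$), so ``overusing'' one of the two peak rules is harmless. Consequently, a single-source gadget with an auxiliary back rule such as $w\to\m{S}$ fails: the default join $w\to\m{S}\to t_0$ is decreasing once the back rule is placed below the two $\m{S}$-rules, regardless of whether the machine halts. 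What makes the paper's construction work is that the ``back'' rules $\m{a}\to\m{b}$ and $\m{b}\to\m{a}$ are \emph{themselves} halves of critical peaks (at $\m{a}$ and at $\m{b}$ respectively); they cannot both be pushed to the bottom, and the two peaks impose opposite order constraints on them. Your sketch does not supply a mechanism with this property, and the ``pass through $\m{S}$ again'' idea, read literally, either introduces unwanted overlaps with the simulation rules or collapses to the trivially decreasing situation above.
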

\begin{proof}
We reduce the problem to the joinability of two ground terms for linear
non-overlapping TRSs. The latter problem is undecidable as an easy
consequence of the encoding of Turing machines as linear non-overlapping
TRSs, see e.g.\ \cite{TeReSe}. So let $\SS$ be a (finite) linear
non-overlapping TRS and let $s$ and $t$ be arbitrary ground terms. We
extend $\SS$ with fresh constants $\m{a},\m{b}$ and the rewrite rules 
\begin{xalignat*}{4}
\m{a} & \to s & 
\m{b} & \to t &
\m{a} & \to \m{b} & 
\m{b} & \to \m{a} 
\end{xalignat*}
to obtain the TRS $\RR$. If $s$ and $t$ are joinable (in $\SS$) then 
the four critical pairs are locally decreasing by labeling
the above four rules with $1$ and all rules in $\SS$ with $2$ and
using the order $1 > 2$. If $s$ and $t$ are not joinable,
then no rule-labeling will make the critical pairs locally
decreasing. So confluence of $\RR$ can be established by the rule-labeling
heuristic if and only if the terms $s$ and $t$ are joinable in $\SS$.
\qed
\end{proof}

By putting a bound on the number of steps to check joinability
we obtain a decidable condition for (extended) local decreasingness:
\[
l_2[r_1]_p\mu 
\:
\mathrel{\overbrace
{\rule[3ex]{0pt}{0pt}\smash{\xrightarrow{\dd}}_\alpha^*
\cdot \xrightarrow{\ddg}_\beta^{=}
\cdot \mathrel{\smash{\xrightarrow{\dd}}_{\alpha\beta}^*}
}^{\text{at most $k$ steps}}}
\cdot 
\mathrel{\overbrace
{\rule[3ex]{0pt}{0pt}\mathrel{\test{\xleftarrow{\dd}}{*}{\alpha\beta}}
\cdot \mathrel{\test{\xleftarrow{\ddg}}{=}{\alpha}}
\cdot \mathrel{\test{\xleftarrow{\dd}}{*}{\beta}}
}^{\text{at most $k$ steps}}}
\:
r_2\mu
\]
for each overlap $(l_1 \to r_1,p,l_2 \to r_2 )_\mu$ of $\RR$
with $\alpha = l_1 \to r_1$ and $\beta = l_2 \to r_2$.
Below we reduce this to \emph{precedence constraints} of the form
\[
\phi ::= 
\top
\mid \bot
\mid \phi \lor \phi
\mid \phi \land \phi
\mid \alpha > \alpha
\mid \alpha \geqslant \alpha
\]
where $\alpha$ stands for variables corresponding to the rules in $\RR$.
From the encodings of termination methods for term rewriting, we know
that the satisfiability of such precedence constraints is easily
determined by SAT or SMT solvers (cf.\ \cite{CLS06,ZHM09}).

\begin{definition}
For terms $s$, $t$ and $k \geqslant 0$,
a pair
\(
((\gamma_1, \dots, \gamma_m), (\delta_1, \dots, \delta_n))
\)
is called a \emph{$k$-join instance} of $(s,t)$ if $m, n \leqslant k$, 
$\gamma_1, \dots, \gamma_m, \delta_1, \dots, \delta_n \in \RR$, and
\[
s \to_{\gamma_1}^{} \cdot \: \cdots \: \cdot \to_{\gamma_m}^{} \cdot
\From{\delta_n}{} \cdot \: \cdots \: \cdot \From{\delta_1}{} t
\]
The \emph{embedding order} $\sqsupseteq$ on sequences is defined as
$(a_1,\ldots,a_n) \sqsupseteq (a_{i_1},\ldots,a_{i_m})$ whenever
$1 \leqslant i_1 < \cdots < i_m \leqslant n$.
The set of all minimal (with respect to
${\sqsupseteq} \times {\sqsupseteq}$) $k$-join instances of $(s,t)$ is
denoted by $J_k(s,t)$.
We define $\Phi^\alpha_\beta((\gamma_1,\ldots,\gamma_n))$ as
\[
\bigvee_{i \leqslant n}
\biggl( \bigwedge_{j < i} \alpha > \gamma_j
\land \Psi_{i,n}
\biggr) 
\]
with $\Psi_{i,n}$ denoting $\top$ if $i = n$ and
\[
\beta \geqslant \gamma_i \,\land\, \bigwedge_{i < j \leqslant n}
(\alpha > \gamma_j \,\lor\, \beta > \gamma_j)
\]
if $i < n$. Furthermore, $\RL_k(\RR)$ denotes the conjunction of
\[
\bigvee \bigl\{ \Phi^{l_1 \to r_1}_{l_2 \to r_2}(\vec \gamma) \land
\Phi^{l_2 \to r_2}_{l_1 \to r_1}(\vec \delta) \bigm|
(\vec \gamma, \vec \delta) \in J_k(l_2[r_1]_p\mu,r_2\mu) \bigr\}
\]
for all overlaps $(l_1 \to r_1,p,l_2 \to r_2)_\mu$ of $\RR$.
\end{definition}

The only non-trivial part of the encoding is the minimality 
condition in $J_k(s,t)$. The next lemma explains why
non-minimal pairs can be excluded from the set and Example~\ref{rl}
shows the benefit of doing so.

\begin{lemma}
If $\Phi^\alpha_\beta(\vec\delta)$ is satisfiable and
$\vec\delta \sqsupseteq \vec\gamma$ then $\Phi^\alpha_\beta(\vec\gamma)$
is satisfiable.
\end{lemma}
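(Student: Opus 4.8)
The lemma asserts a monotonicity property of the formula
$\Phi^\alpha_\beta$ with respect to the embedding order $\sqsupseteq$ on
sequences: if $\vec\delta \sqsupseteq \vec\gamma$ (so $\vec\gamma$ is a
subsequence of $\vec\delta$) and $\Phi^\alpha_\beta(\vec\delta)$ is
satisfiable, then $\Phi^\alpha_\beta(\vec\gamma)$ is satisfiable too.
The point is that a satisfying assignment (i.e.\ a precedence on the rules)
for the longer sequence can be reused for the shorter one. The plan is to
take a precedence $>$ witnessing $\Phi^\alpha_\beta(\vec\delta)$, identify
the index $i$ realizing the outer disjunction, and produce a corresponding
index for $\vec\gamma$.

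**The approach.**
Write $\vec\delta = (\delta_1,\dots,\delta_n)$ and
$\vec\gamma = (\delta_{i_1},\dots,\delta_{i_m})$ with
$1 \leqslant i_1 < \cdots < i_m \leqslant n$. First I would fix a
precedence satisfying $\Phi^\alpha_\beta(\vec\delta)$ and let $i$ be a
witnessing index for the outer $\bigvee_{i\leqslant n}$, so that
$\alpha > \delta_j$ holds for all $j < i$, and $\Psi_{i,n}$ holds (which,
when $i<n$, means $\beta \geqslant \delta_i$ together with
$\alpha>\delta_j \lor \beta>\delta_j$ for all $j$ with $i<j\leqslant n$).
Intuitively $i$ marks the single position in the sequence that is
compared with $\geqslant$ rather than $>$; everything strictly before it
is dominated by $\alpha$, and everything strictly after it is dominated by
$\alpha$ or $\beta$. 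Since $\vec\gamma$ is a subsequence, I would pick the
witness index $i'$ for $\vec\gamma$ to be the position of the smallest
$i_k \geqslant i$ in the subsequence, i.e.\ the first index of $\vec\gamma$
that is at least $i$ in the original numbering; if no such index exists I
set $i' = m$ (the last position), which corresponds to the $\Psi=\top$
case. The key observation is that every element of $\vec\gamma$ is also an
element of $\vec\delta$, so the inequalities already secured for
$\vec\delta$ transfer verbatim.

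**Verifying the disjunct.**
With $i'$ chosen as above, I would check the two conjuncts of the
$i'$-th disjunct of $\Phi^\alpha_\beta(\vec\gamma)$. For the prefix
$\bigwedge_{k<i'} \alpha > \delta_{i_k}$: every such $\delta_{i_k}$ has
original index $i_k < i$ (by the choice of $i'$ as the first subsequence
position with $i_k \geqslant i$), and for those indices $\alpha > \delta_{i_k}$
already holds. For the suffix part of $\Psi_{i',m}$: when $i' < m$ the
element $\delta_{i_{i'}}$ has original index $\geqslant i$, so it satisfies
either $\beta \geqslant \delta_{i_{i'}}$ (if its original index equals $i$)
or $\alpha > \delta_{i_{i'}} \lor \beta > \delta_{i_{i'}}$ (if its original
index exceeds $i$); in the latter case the stronger disjunct
$\alpha>\delta_{i_{i'}}\lor\beta>\delta_{i_{i'}}$ implies the required
$\beta \geqslant \delta_{i_{i'}}$ only after noting that $>$ is contained in
$\geqslant$ is not assumed, so one must instead observe that choosing $i'$
as the \emph{smallest} such index forces its original index to be exactly
$i$ whenever $i$ itself survives into $\vec\gamma$. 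The remaining suffix
elements $\delta_{i_k}$ with $k>i'$ all have original index $>i$ and hence
already satisfy $\alpha>\delta_{i_k}\lor\beta>\delta_{i_k}$.

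**The main obstacle.**
The delicate point is the handling of the distinguished index $i$ when it
is \emph{not} retained in the subsequence $\vec\gamma$. If $i$ is dropped,
then the element that must be compared with $\geqslant$ in
$\Phi^\alpha_\beta(\vec\gamma)$ is a different one, and I must argue that
the assignment still satisfies the weaker requirement $\beta\geqslant(\cdot)$
at the new pivot while keeping $\alpha>(\cdot)$ on the (now shorter) prefix.
The clean resolution is to take $i'$ to be the position of the least
$i_k$ satisfying $i_k \geqslant i$, and to observe that every suffix element
already satisfies the \emph{disjunctive} condition
$\alpha>\delta_{i_k}\lor\beta>\delta_{i_k}$, which is exactly what
$\Psi_{i',m}$ demands for the strict-suffix entries, while the single
pivot entry's weaker $\beta\geqslant$ obligation is met because its original
index either equals $i$ (where $\beta\geqslant$ held) or, if $i$ was
dropped, we fall into the $\Psi=\top$ case by choosing $i'=m$. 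Carefully
partitioning these sub-cases is the one place where the argument requires
attention; once the index translation is pinned down, each inequality is
inherited directly from the witness for $\vec\delta$, and the verification
is routine.
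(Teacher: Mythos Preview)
Your overall strategy---fix a satisfying precedence together with a witness index $i$ for $\vec\delta$, then locate a corresponding witness $i'$ for the subsequence $\vec\gamma$---is the intended one (the paper itself simply records the proof as ``Straightforward''). The case analysis, however, breaks down at the very point you flag as delicate. When the pivot index $i$ is \emph{not} retained in $\vec\gamma$ but some element $\delta_{i_k}$ with $i_k > i$ does survive, your fallback ``choose $i' = m$ and fall into the $\Psi = \top$ case'' does not work: that disjunct demands $\alpha > \delta_{i_k}$ for every element placed in the prefix, yet a surviving suffix element of $\vec\delta$ is only guaranteed to satisfy $\alpha > \delta_{i_k} \lor \beta > \delta_{i_k}$. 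For a concrete failure, take $\vec\delta = (\delta_1,\delta_2,\delta_3)$ with witness $i = 1$, where $\beta \geqslant \delta_1$, $\beta > \delta_2$, $\beta > \delta_3$ hold but $\alpha > \delta_2$ and $\alpha > \delta_3$ do not; for $\vec\gamma = (\delta_2,\delta_3)$ no all-prefix disjunct is available, and your argument supplies no alternative.

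The missing ingredient is precisely the inclusion ${>} \subseteq {\geqslant}$, which you explicitly set aside. In the rule-labeling setting (Theorem~\ref{dd-ll}) the relation $\geqslant$ is \emph{defined} as the reflexive closure of $>$, so the inclusion is available. With it, the clean choice is to let $i'$ be the least $k$ for which $\alpha > \delta_{i_k}$ fails (taking the all-prefix disjunct if no such $k$ exists). Then automatically $i_{i'} \geqslant i$; at the new pivot one has either $i_{i'} = i$ (so $\beta \geqslant \delta_{i_{i'}}$ directly) or $i_{i'} > i$ together with the failure of $\alpha > \delta_{i_{i'}}$, whence $\beta > \delta_{i_{i'}}$ and therefore $\beta \geqslant \delta_{i_{i'}}$. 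The remaining suffix entries inherit their disjunctive constraint from $\vec\delta$ verbatim. Once ${>} \subseteq {\geqslant}$ is admitted, the verification is indeed routine.
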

\begin{proof}
Straightforward.
\qed
\end{proof}

We illustrate the encoding on a concrete example.

\begin{example}
\label{rl}
Consider the TRS $\RR$ of Example~\ref{GL}. We show how $\RL_4(\RR)$ is
computed. There is a single overlap $(1,11,5)_\epsilon$ resulting in the
critical pair $s \approx t$ with
$s = \m{inc}(\m{tl}(\m{0} : \m{inc}(\m{nat})))$ and
$t = \m{tl}(\m{inc}(\m{nat}))$.
Its $4$-join instances are
\begin{xalignat*}{4}
((3), (1, 4, 3)) 
&& ((3, 1), (1, 4, 3, 1)) 
&& ((3, 1), (1, 4, 1, 3))
&& ((3, 1), (1, 1, 4, 3)) 
\\
&& ((1, 3), (1, 4, 3, 1))
&& ((1, 3), (1, 4, 1, 3))
&& ((1, 3), (1, 1, 4, 3))
\end{xalignat*}
Only the first one belongs to $J_4(s,t)$ and hence
\[
\RL_4(\RR) \:=\: \Phi^1_5((3)) \land \Phi^5_1((1,4,3))
\]
with $\Phi^1_5((3)) ~=~ 5 \geqslant 3 \,\lor\, 1 > 3$ and
\begin{align*}
\Phi^5_1((1,4,3)) ~=  
     & ~ (1 \geqslant 1 \,\land\, (1 > 4 \,\lor\, 5 > 4) \,\land\,
         (1 > 3 \,\lor\, 5 > 3)) \\
\lor & ~ (5 > 1 \,\land\, 1 \geqslant 4 \,\land\,
         (1 > 3 \,\lor\, 5 > 3)) \\
\lor & ~ (5 > 1 \,\land\, 5 > 4 \,\land\, 1 \geqslant 3) \\
\lor & ~ (5 > 1 \,\land\, 5 > 4 \,\land\, 5 > 3)
\end{align*}
This formula is satisfied by taking (e.g.) the order $5 > 1, 3, 4$. Hence,
the confluence of $\RR$ is concluded by local decreasingness with respect
to the rule labeling heuristic using at most 3 steps to close critical
pairs.
\end{example}

\begin{theorem}
\label{dd-rl}
A linear TRS $\RR$ is confluent if $\RL_k(\RR)$ is satisfiable for some
$k \geqslant 0$.
\qed
\end{theorem}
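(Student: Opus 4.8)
The plan is to reduce Theorem~\ref{dd-rl} to the rule-labeling result Theorem~\ref{dd-ll}. I read satisfiability semantically: a satisfying assignment of the precedence constraint $\RL_k(\RR)$ determines a well-founded order $>$ on the rules of $\RR$, with $\geqslant$ its reflexive closure, under which every atom $\alpha > \alpha'$ and $\alpha \geqslant \alpha'$ set to true actually holds; that such an order can be extracted from a SAT or SMT model is exactly the content of the level encodings of~\cite{CLS06,ZHM09}. It then suffices to check that this $>$ meets the hypothesis of Theorem~\ref{dd-ll}, namely that every critical pair arising from an overlap $(l_1 \to r_1, p, l_2 \to r_2)_\mu$ is locally decreasing.

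Fix such an overlap and put $\alpha = l_1 \to r_1$, $\beta = l_2 \to r_2$, $s = l_2[r_1]_p\mu$, and $t = r_2\mu$, so that $s \cpab t$. Since $\RL_k(\RR)$ is a conjunction over all overlaps, its conjunct for this overlap holds, and that conjunct is a disjunction over $J_k(s,t)$. Hence the assignment selects a $k$-join instance $(\vec\gamma, \vec\delta) \in J_k(s,t)$ for which $\Phi^\alpha_\beta(\vec\gamma)$ and $\Phi^\beta_\alpha(\vec\delta)$ are both true. By the definition of a $k$-join instance this is a genuine valley $s \to_{\gamma_1} \cdots \to_{\gamma_m} w$ and $t \to_{\delta_1} \cdots \to_{\delta_n} w$ for some term $w$, with $m, n \leqslant k$. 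Minimality of the instance plays no role for this (soundness) direction; it only guarantees, via the monotonicity lemma stated just before Example~\ref{rl}, that restricting $J_k$ to minimal instances loses no solutions, so I simply use the valley handed to me by the assignment.

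The crux is a correctness lemma for the encoding: for the order $>$ (with reflexive closure $\geqslant$) and any sequence $\gamma_1, \dots, \gamma_m$ labelling a forward reduction $s \to_{\gamma_1} \cdots \to_{\gamma_m} w$, the constraint $\Phi^\alpha_\beta((\gamma_1,\dots,\gamma_m))$ holds if and only if that reduction factors as $s \smash{\xrightarrow{\dd}}_\alpha^* \cdot \xrightarrow{\ddg}_\beta^= \cdot \smash{\xrightarrow{\dd}}_{\alpha\beta}^* w$. I would prove this by matching the Boolean structure of $\Phi$ to the block decomposition: the disjunction index fixes the position of the single optional $\xrightarrow{\ddg}_\beta$-step, the conjunction $\bigwedge_{j<i}\alpha > \gamma_j$ forces the prefix into $\smash{\xrightarrow{\dd}}_\alpha$, the atom $\beta \geqslant \gamma_i$ places the middle step in $\xrightarrow{\ddg}_\beta$, and the tail conjunction $\alpha > \gamma_j \lor \beta > \gamma_j$ places each remaining step in $\smash{\xrightarrow{\dd}}_{\alpha\beta}$. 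Applying this to $\vec\gamma$, and symmetrically (with the roles of $\alpha$ and $\beta$ exchanged) to $\vec\delta$, rewrites the valley of the second paragraph into precisely the local-decreasingness diagram required of $s \cpab t$ by Theorem~\ref{dd-ll}.

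As this holds for every overlap, Theorem~\ref{dd-ll} delivers the confluence of $\RR$. The main obstacle is the correctness lemma of the third paragraph: one must verify that $\Phi$'s propositional shape encodes the three-block pattern exactly, with special attention to the degenerate cases in which a block is empty or the optional $\xrightarrow{\ddg}_\beta$-step is absent (these are what produce the two disjuncts $5 \geqslant 3 \lor 1 > 3$ of $\Phi^1_5((3))$ in Example~\ref{rl}); a mismatch here would break the reduction to Theorem~\ref{dd-ll}. The remaining steps, namely reading the order off the model, extracting the concrete valley from a $k$-join instance, and gluing the two one-sided factorizations, are routine.
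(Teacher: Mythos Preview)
Your proposal is correct and matches the paper's intended argument: the paper states Theorem~\ref{dd-rl} with an immediate \qed, treating it as an evident consequence of the encoding and Theorem~\ref{dd-ll}, and your write-up is precisely the natural unpacking of that implication (satisfying assignment $\leadsto$ order on rules $\leadsto$ each overlap's disjunct picks a $k$-join instance whose $\Phi$-constraints decode to the three-block factorization required by Theorem~\ref{dd-ll}). Your reading of $\Phi^\alpha_\beta$ is the right one, including the degenerate disjunct for an absent middle step, and your remark that minimality in $J_k$ is irrelevant for soundness is also correct.
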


\section{Experimental Results}
\label{experiments}

\begin{table}[t]
\begin{center}
\begin{tabular}{@{}l@{\qquad}rrrrrrr@{}}
\hline \\[-1ex]
& (a)
& (b)
& (c)
& (d)
& (e)
& (f)
& (g)
\\[1ex]
YES
& \phantom{1}20
& \phantom{1}81
& \phantom{1}67
& \phantom{1}49
& 100
& 107
& 135 
\\
timeout ($60$ s)
& 0
& 0
& 0
& 3
& 4
& 3
& 17
\\[1ex]
\hline
\end{tabular}
\end{center}
\caption{Summary of experimental results}
\label{summary}
\end{table}

We tested our methods on a collection of 424 TRSs, consisting
of the 103 TRSs in the ACP distribution,\footnote{%
\url{http://www.nue.riec.tohoku.ac.jp/tools/acp/}}
the TRSs of Examples~\ref{counterexample1}, \ref{main-example}, and
\ref{counterexample2}, and those TRSs in
version 5.0 of the Termination Problems Data Base\footnote{%
\url{http://termination-portal.org/wiki/TPDB}}
that are either non-terminating or not known to be
terminating.
(Systems that have extra variables in right-hand sides of rewrite rules
are excluded.)
The results are summarized in Table~\ref{summary}.
The following techniques are used to produce the columns:
\begin{itemize}
\item[(a)]
Knuth and Bendix criterion~\cite{KB70}: termination and joinability of all
critical pairs,
\item[(b)]
orthogonality,
\item[(c)]
Theorem~\ref{dd-rl} with $k = 4$,
\item[(d)]
Theorem~\ref{dd-l1},
\item[(e)]
Theorem~\ref{dd-l2},
\item[(f)]
The extension of Theorem~\ref{dd-l2} mentioned at the end of
Section~\ref{confluence via relative termination} in which critical pair
steps that generate trivial critical pairs are excluded from $\PR(\RR)$,
\item[(g)]
ACP~\cite{ACP}.
\end{itemize}
To obtain the data in columns (a)--(f) we slightly extended the
open source termination tool \TTTT. For the data in column (c) the
SAT solver \textsf{\mbox{Mini}Sat}~\cite{MINISAT} is used.
Since local confluence is undecidable (for non-terminating TRSs),
in (c)--(f) it is approximated by
${\cp} \subseteq \bigcup \{ {\to^i} \cdot {\From{}{j}} \mid
i, j \leqslant 4 \}$.

ACP proves that 198 of the 424 TRSs are not confluent.
Of the remaining 226 TRSs, local confluence can be shown using at most
4 rewrite steps from both terms in every critical pair for 185 TRSs.
Moreover, of these 185 TRSs, 147 are left-linear and 75 are linear.
As a final remark, the combination of (c) and (f) proves that
129 TRSs are confluent, and the combination of (c), (f), and (g)
shows confluence for 145 TRSs.
These numbers clearly show that both our results have a role to play.

\section{Conclusion}
\label{conclusion}

In this paper we presented two results based on the decreasing diagrams
technique for proving confluence of TRSs. For linear TRSs we showed how
the rule-labeling heuristic can be implemented by means of
an encoding as a satisfiability problem and we employed the self-labeling
heuristic to obtain the result that an arbitrary left-linear locally
confluent TRS is confluent if its critical pair steps are relatively
terminating with respect to its rewrite rules. We expect that both results
will increase the power of ACP~\cite{ACP}.

As future work we plan to investigate whether the latter result can be
strengthened by decreasing the set $\PR(\RR)$ of critical pair steps
that need to be relatively terminating with respect to $\RR$. 
We anticipate that some of the many critical pair criteria for confluence
that have been proposed in the literature (e.g.\ \cite{H80,vO97}) can be
used for this purpose. The idea here is to exclude the critical pair
steps that give rise to critical pairs whose joinability can be shown by
the conditions of the considered criterion.

\subsection*{Acknowledgements}

We thank Mizuhito Ogawa, Vincent van Oostrom, and Harald Zankl for
scrutinizing an earlier version of this paper.

\end{document}